\def\idrm#1{\ensuremath{\mathrm{#1}}}
\def\floor#1{\lfloor #1 \rfloor}
\def\ceil#1{\lceil #1 \rceil}
\newcommand{\no}[1]{}
\newcommand{\todo}[1]{} 
\newtheorem{theorem}{Theorem}
\newtheorem{lemma}{Lemma}
\newtheorem{transform}{Transformation}
\newenvironment{proof}{\trivlist\item[]\emph{Proof}:}%
{\unskip\nobreak\hskip 1em plus 1fil\nobreak$\Box$
\parfillskip=0pt%
\endtrivlist}
\newenvironment{itemize*}%
  {\begin{itemize}%
    \setlength{\itemsep}{0pt}%
    \setlength{\parskip}{0pt}%
    \setlength{\parsep}{0pt}%
    \setlength{\topsep}{0pt}%
    \setlength{\partopsep}{0pt}%
  }%
  {\end{itemize}}%
\newcommand{\cT}{{\cal T}}
\newcommand{\cD}{{\cal D}}
\newcommand{\cC}{{\cal C}}
\newcommand{\cI}{{\cal I}}
\newcommand{\cL}{{\cal L}}
\newcommand{\cN}{{\cal N}}
\newcommand{\Temp}{{Temp}}
\newcommand{\bT}{\mathbb{T}}
\newcommand{\bN}{\mathbb{N}}
\newcommand{\bC}{\mathbb{C}}
\newcommand{\bL}{\mathbb{L}}
\newcommand{\bR}{\mathbf{R}}
\newcommand{\bA}{\mathbf{A}}
\newcommand{\of}{\overline{f}}
\newcommand{\on}{\overline{n}}
\newcommand{\eps}{\varepsilon}
\newcommand{\occ}{\mathrm{occ}}
\newcommand{\tcount}{t_{\mathrm{count}}}
\newcommand{\trange}{t_{\mathrm{range}}}
\newcommand{\tlocate}{t_{\mathrm{locate}}}
\newcommand{\textract}{t_{\mathrm{extract}}}
\newcommand{\tsa}{t_{\mathrm{SA}}}
\newcommand{\ra}{\idrm{rank}}
\newcommand{\sel}{\idrm{select}}
\newcommand{\shortver}[1]{}
\newcommand{\longver}[1]{#1}
\newcommand{\shlongver}[2]{#2}
\begin{document}
\title{
Dynamic Data Structures for Document Collections and Graphs}
\author{
J. Ian Munro\thanks{Cheriton School of Computer Science, University of Waterloo. Email {\tt imunro@uwaterloo.ca}.}
\and
Yakov Nekrich\thanks{Cheriton School of Computer Science, University of Waterloo.
Email: {\tt yakov.nekrich@googlemail.com}.}
\and
Jeffrey Scott Vitter\thanks{Department of Electrical Engineering \& Computer Science, University of Kansas.
Email: {\tt jsv@ku.edu}.}
}
\date{}
\maketitle




\begin{abstract}
In the dynamic indexing problem, we must maintain a changing collection of text documents so that we can efficiently 
support insertions, deletions, and pattern matching queries.
We are especially interested in developing efficient data structures that store and query the documents in compressed form.
All previous compressed solutions to this problem rely on answering rank and select queries on a dynamic sequence of symbols. 
Because of the lower bound in [Fredman and Saks, 1989], 
 answering rank queries presents a bottleneck in compressed dynamic indexing. In this paper we show how this lower bound can be circumvented using our new framework.
We demonstrate that the gap between static and dynamic variants of the indexing problem can be almost closed. Our method is based on a novel framework for adding dynamism to static compressed data structures.
Our framework also applies more generally to dynamizing other problems.
We show, for example,  how our framework can be applied to develop compressed representations of dynamic graphs and binary
 relations. 
\end{abstract}



\pagestyle{plain}
\section{Introduction}
Motivated by the preponderance of massive data sets (so-called ``big data''), it is becoming increasingly useful to store data in compressed form 
and moreover to  manipulate and query the data while in compressed form.  For that reason, such compressed data structures have 
been developed in the context of text indexing, graph representations, XML indexes, labeled trees, and many other applications.  
In this paper we describe a general framework to convert known static compressed data structures into dynamic  compressed data structures. We show how this framework can be used to obtain significant improvements for two important dynamic problems: maintaining a  dynamic graph and storing a dynamic collection of documents. We expect that our general framework will find further applications.

In the indexing problem, we keep a text or a collection of texts in a data structure, so that, given a query pattern, we can list 
all occurrences of the query pattern in the texts. This problem is one of the most fundamental 
in the area of string algorithms. Data structures that use 
$O(n\log n)$ bits of space can answer pattern matching queries 
in optimal time $O(|P|+\occ)$, where $|P|$ denotes the length 
of the query pattern $P$ and $\occ$ is the number of occurrences of~$P$. 
Because of the large volumes of data stored in text data bases 
and document collections, we are especially interested in data structures that store the text or texts in compressed form and at the same time can answer pattern matching queries efficiently. 
Compressed indexing problem was extensively studied in the static scenario and during the last two decades significant 
progress has been achieved; we refer to a survey~\cite{NavarroM07} for an overview of previous results in this area. 

In the dynamic indexing problem, also known as the library management problem, we maintain a collection of documents (texts) in a data structure under insertions and deletions of texts. It is not difficult to keep a dynamic collection of texts in~$O(n)$ words (i.e., $O(n\log n)$ bits) and support pattern matching queries at the same time. For instance, we can maintain suffixes of all texts in a suffix tree; when a new text is added or deleted, we add  all suffixes of the new text  to the suffix tree (respectively, remove all suffixes of the deleted text from the suffix tree). We refer the reader \shlongver{to the full version}{Section~\ref{sec:uncompr}} for a more detailed description of the $O(n\log n)$-bit solution. The problem of keeping a dynamic document collection in compressed form is however more challenging. Compressed data structures for the library management problem were considered in a number of papers~\cite{CHL04,MN06,CHLS07,MN06,LP07,MN08,GN08,LP09,GN09,HM10,NS10,NavarroN13}. In spite of previous work,  the query times of previously described dynamic data structures  significantly exceed the query times of the best static indexes. 
In this paper we show that the gap between the static and the dynamic variants of the compressed indexing problem can be closed or almost closed. 
Furthermore we show that our approach can be applied to the succinct representation of dynamic graphs and binary relations that supports basic adjacency and neighbor queries. Again 
our technique significantly reduces the gap between static and dynamic variants of this problem.

These problems arise often in database applications. 
For example, reporting or counting occurrences of a string in a dynamic collection of documents is an important operation in text databases and web browsers.
Similar tasks also arise in data analytics. Suppose that we keep a search log and want to find out how many times URLs containing a certain substring were accessed. 
Finally the indexing problem is closely related to the problem of substring occurrence estimation~\cite{OrlandiV11}. The latter problem is used in solutions of the substring selectivity estimation problem~\cite{ChaudhuriGG04,JagadishNS99,KrishnanVI96}; we refer to~\cite{OrlandiV11}  for a more extensive description. 
 Compressed storage schemes for such problems  help us save space and boost general performance because a larger portion of  data can reside in the fast memory.
Graph representation of data is gaining importance in the database community. For instance, the set of subject-predicate-object RDF triples can be represented as a graph or as  two binary relations~\cite{FernandezMG10}.  
Our compressed representation applied to an
RDF graph enables us to support basic reporting and
counting queries on triples. An example of such a query is given~$x$, to
enumerate all the triples in which $x$ occurs as a subject.
Another example is, given $x$ and $p$, to enumerate all triples in which $x$ occurs as a subject and $p$ occurs as a predicate.
\paragraph{Previous Results. Static Case}
 We will denote by $|T|$ the number of symbols in a sequence $T$ or in a collection of sequences; $T[i]$ denotes the $i$-th element in a sequence $T$ and $T[i..j]=T[i]T[i+1]\ldots T[j]$.  
Suffix trees and suffix arrays are two handbook data structures for the indexing problem.
Suffix array  keeps (references to) all suffixes $T[i..n]$ of  a text $T$ in lexicographic order. Using a suffix array, we can find the range of suffixes starting with a query string $P$ in $\trange=O(|P|+\log n)$ time; once this range is found, we can locate each occurrence of~$P$ in~$T$ in $\tlocate=O(1)$ time. 
A suffix tree is a compact trie that contains references to all suffixes $T[i..n]$ of a text $T$.
Using a suffix trie, we can find the range of suffixes starting with a query string $P$ in $\trange=O(|P|)$ time; once this range is found, we can locate every occurrence of~$P$ in~$T$ in $\tlocate=O(1)$ time. 
A large number of compressed indexing data structures are described in the literature; we refer to~\cite{NavarroM07} for a survey. These data structures follow the same two-step procedure for answering a query: first, the range  of suffixes that start with $P$ is found in $O(\trange)$ time, then we locate each occurrence of~$P$ in~$T$ in $O(\tlocate)$ time. Thus we report all $\occ$ occurrences of~$P$  in $O(\trange +\occ\cdot\tlocate)$ time. We can also extract any substring $T[i..i+\ell]$ of~$T$ in $O(\textract)$ time. Data structures supporting queries on a text $T$ can be extended to answer queries on a collection $\cC$ of texts: it suffices to append a unique symbol $\$_i$ at the end of every text $T_i$ from $\cC$ and keep the concatenation of all $T_i$ in the data structure. 

We list the currently best and selected previous results for static text indexes with asymptotically optimal space usage in Table~\ref{tbl:best1}. All listed data structures can achieve different space-time trade-offs that depend on parameter $s$: an index typically needs about
$nH_k+o(n\log\sigma)+O(n\log n/s)$ bits and $\tlocate$ is proportional to $s$.
Henceforth $H_k$ denotes the $k$-th order empirical entropy and $\sigma$ denotes the alphabet size\footnote{Let $S$ be an arbitrary string over an alphabet $\Sigma=\{\,1,\ldots,\sigma\,\}$. A \emph{context} $s_i\in \Sigma^k$ is an arbitrary string of length $k$.  Let $n_{s_i,a}$ be the number of times the symbol $a$ is preceded by a context $s_i$ in $S$ and $n_{s_i}=\sum_{a\in \Sigma}n_{s_i,a}$. Then $H_k=-\sum_{s_i\in \Sigma^k} \sum_{a\in \Sigma}n_{s_i,a}\log \frac{n_{s_i,a}}{n_{s_i}}$ is the $k$-th order empirical entropy of $S$.}. We assume that $k\le \alpha\log_{\sigma}n-1$ for a constant $0<\alpha<1$. $H_k$ is the lower bound on the average space usage of any statistical compression method that encodes each symbol using the context of~$k$ previous symbols~\cite{Manzini01}. 
 The currently fastest such index of Belazzougui and Navarro~\cite{BelazzouguiN11} reports all occurrences of~$P$ in $O(|P|+ s\cdot\occ)$ time and extracts a substring of length $\ell$ in 
$O(s+\ell)$ time.
Thus their query time depends only on the parameter $s$ and the length of~$P$. Some recently described indices~\cite{BGNN10,BelazzouguiN11}  achieve space usage $nH_k+o(nH_k)+o(n)$ or $nH_k+o(nH_k)+O(n)$ instead of~$nH_k+o(n\log \sigma)$.

If we are interested in obtaining faster data structures and can use $\Theta(n\log\sigma)$ bits of space, then better trade-offs between space usage and time are possible~\cite{GGV03,GrossiV05}. 
For the sake of space, we describe only one such result.
The data structure of Grossi and Vitter~\cite{GrossiV05} uses $O(n\log \sigma)$ bits and reports occurrences of a pattern 
in $O(|P|/\log_{\sigma}n+\log^{\eps}n+ \occ\log^{\eps}n)$ time; see Table~\ref{tbl:nlogsigma}.
We remark that the  fastest data structure in Table~\ref{tbl:best1} needs 
$\Omega(n\log^{1-\eps}n)$ space to obtain the same time 
for $\tlocate$ as in~\cite{GrossiV05}. If a data structure from Table~\ref{tbl:best1} uses 
$O(n\log\sigma)$ space, then $\tlocate=\Omega(\log_{\sigma} n)$.

\tolerance=1000
\paragraph{Dynamic Document Collections}
In the dynamic indexing problem, we maintain a collection of documents (strings) under insertions and deletions. An insertion adds a new document to the collection, 
a deletion removes a document from the collection. For any query substring $P$, we must return all occurrences of $P$ in all documents. When a query is answered, relative positions of occurrences are reported. To be precise, we must report all pairs $(\mathit{doc},\mathit{off})$, such that $P$ occurs in a document $\mathit{doc}$ at position $\mathit{off}$. 
We remark that  relative positions of $P$ (with respect to document boundaries) are reported. Hence an insertion or a deletion of a document does not change positions of $P$ in other documents. Indexes for dynamic collections of strings were also studied extensively~\cite{CHL04,MN06,CHLS07,LP07,MN08,GN08,LP09,GN09,HM10,NS10,NavarroN13}. The fastest previously known result for the case of large alphabets is described in~\cite{NavarroN13}.  
Their data structure, that builds on a long line of previous work,  uses $nH_k+o(n\log \sigma) + O(n\log n/s) +O(\rho\log n)$ bits of space, where  $\rho$ is the number of documents; queries are answered  in $O(|P|\log n/\log\log n + \occ\cdot s\cdot \log n /\log \log n)$ time and updates are supported in $O(\log n + |T_u|\log n/\log \log n)$ amortized time, where $T$ is the document inserted into or deleted from the collection. See Table~\ref{tbl:best2} for some other previous results.
\begin{table*}[tb]
  \centering
\resizebox{\textwidth}{!}{
  \begin{tabular}{|c|l|l|l|l|l|} \hline
    Ref. & Space ($+O(n\frac{\log n}{s})$)& $\trange$ & $\tlocate$ & $\textract$ &  $\sigma$ \\
         &                                &           &            &             &            \\ \hline
\cite{GGV03} & $nH_k+o(n\log\sigma)$ & $O(|P|\log\sigma+\log^4n)$ & $O(s\log\sigma)$ & $O((s+\ell)\log\sigma)$  & \\
\cite{Sadakane03}    & $nH_k + o(n\log \sigma)$ & $O(|P|\log n)$ & $O(s)$ & $O(s+\ell)$  & \\
\cite{FMMN07} & $nH_k + o(n\log \sigma)$ & $O(|P| \frac{\log\sigma}{\log\log n})$ & $O(s\frac{\log\sigma}{\log \log n})$ & $O((s+\ell)\frac{\log \sigma}{\log\log n})$  &  \\
\cite{BHMR07} & $nH_k + o(n\log \sigma)$ &  $O(|P| \log\log \sigma)$ & $O(s\log\log\sigma)$ & $O((s+\ell)\log\log \sigma)$  &  \\
\cite{BGNN10} & $nH_k +o(nH_k)+ o(n)$ & $O(|P| \frac{\log\sigma}{\log\log n})$ & $O(s\frac{\log\sigma}{\log \log n})$ & $O((s+\ell)\frac{\log \sigma}{\log\log n})$  &  \\
\cite{BGNN10} & $nH_k +o(nH_k)+ o(n)$ & $O(|P| \log\log \sigma)$ & $O(s\log\log\sigma)$ & $O((s+\ell)\log\log \sigma)$  &  \\
\cite{BGNN10} & $nH_k +o(nH_k)+ o(n)$ & $O(|P|)$ & $O(s)$ & $O(s+\ell)$  & $\log^{\mathrm{const}}n$ \\
\cite{BelazzouguiN11} & $nH_k+o(nH_k)+O(n)$ & $O(|P|)$ & $O(s)$ & $O(s+\ell)$  & \\ \hline   
  \end{tabular}
}
\caption{Asymptotically optimal space data structures for  static indexing. Occurrences of a string $P$ can be found in $O(\trange+\tlocate\cdot \occ)$ time. A substring $T[i..i+\ell]$ of~$T$ can be extracted in $O(\textract)$ time.
    Results are valid for any  $k\le \alpha\log_{\sigma}n-1$ and $0<\alpha<1$.}
\label{tbl:best1}
\end{table*}
\begin{table*}[tb]
  \centering
\resizebox{\textwidth}{!}{
  \begin{tabular}{|c|l|l|l|l|l|l|} \hline
    Ref. & Space & $\trange$ & $\tlocate$ & $\textract$ & Insert/ & $\sigma$ \\
         & ($+O(n\frac{\log n}{s})+\rho\log n$)      &           &            &             & Delete  &           \\ \hline
\cite{CHLS07} & $O(n)$ & $O(|P|\log n)$ & $O(\log^2 n)$ & $O((\log n+\ell)\log n)$ & $O(|T_u|\log n)$ & $\mathrm{const}$ \\
\cite{MN06}   & $nH_k+o(n\log\sigma)$ & $O(|P|\log n \log \sigma)$ & $O(s \log n \log \sigma)$ & $((s+\ell)\log n \log \sigma)$ & $O(|T_u|\log n\log \sigma)$ &  \\
\cite{NavarroN13} & $nH_k+o(n\log\sigma)$ & $O(|P|\log n)$ & $O(s \log n)$ & $O((s+\ell)\log n)$ & $O(|T_u|\log n)$ &  \\
\cite{NavarroN13} & $nH_k+o(n\log\sigma)$ & $O(|P|\frac{\log n}{\log\log n})$ & $O(s \log n /\log\log n)$ & $O((s+\ell)\frac{\log n}{\log\log n})$ & $O(\log n + |T_u|\frac{\log n}{\log\log n})^{\bA}$ &  \\ \hline
Our & $nH_k+o(n\log\sigma)$ & $O(|P|\log\log n)$ & $O(s)$ & $O(s+\ell)$ & $O(|T_u|\log^{1+\eps} n)$ & $\log^{\mathrm{const}}n$ \\
Our & $nH_k+o(n\log\sigma)$ & $O(|P|\log\log n\log\log\sigma)$ & $O(s\log\log\sigma)$ & $O((s+\ell)\log\log\sigma)$ & $O(|T_u|\log^{\eps} n)$/ &  \\
    &                            &                    &        &             &  $O(|T_u|(\log^{\eps}n+s))$ \\ 
Our & $nH_k+o(n\log\sigma)$ & $O(|P|\log\log n)$ & $O(s)$ & $O(s+\ell)$ & $O(|T_u|\log^{\eps} n)^{\bR}$/ &  \\
    &                            &                    &        &             &  $O(|T_u|(\log^{\eps}n+s)^{\bR}$ \\ \hline
\end{tabular}
}
  \caption{Asymptotically optimal space data structures for  dynamic indexing. The same notation as in Table~\ref{tbl:best1} is used. 
Randomized update procedures that achieve specified cost in expectation are  marked with $\bR$.  Amortized update costs are marked with $\bA$. $T_u$ denotes the document that is inserted into (resp.\ deleted from) the data structure during an update operation.  In previous papers on dynamic indexing only the cases of~$s=\log n$ or $s=\log_{\sigma}n\log \log n$ was considered, but extension to an arbitrary value of~$s$ is straightforward.}
\label{tbl:best2}
\end{table*}

An important component of previous dynamic solutions  is a data structure supporting rank and select queries: a sequence $S$ over an alphabet $\Sigma=\{\,1,\ldots,\sigma\,\}$ is kept in a data structure so that the $i$-th occurrence of a
symbol $a\in \Sigma$ and the number of times a symbol $a$ occurs
in $S[1..i]$ for any $1\le i\le n$ can be computed.   
Thus progress in dynamic indexing was closely related to progress in dynamic data structures for rank 
and select queries.  In~\cite{NavarroN13} the authors obtain a dynamic data structure that supports rank and select in  $O(\log n/\log \log n)$ time. By the lower bound of Fredman and Saks~\cite{FS89}, this query time is optimal in the dynamic scenario. It was assumed that the solution of the library management problem described  in~\cite{NavarroN13} achieves query time that is close to optimal.

\paragraph{Our Results}
In this paper we show that the lower bound on dynamic rank-select problem can be circumvented and describe data structures that need significantly less time to answer queries.  Our results close or almost close the gap between static and dynamic  indexing.
If the alphabet size $\sigma=\log^{O(1)}n$, we can obtain an $(nH_k+o(n\log\sigma)  + O(n\frac{\log n}{s}))$-bit data structure that answers queries in $O(|P|\log\log n +  \occ \cdot s)$ time; updates are supported in  $O(|T_u|\log^{1+\eps}n)$  time, where $T_u$ denotes the document that is inserted into or deleted from the index. Our second data structure supports updates in $O(|T_u|\log^{\eps}n)$ expected time and answers queries in $O(|P|\log\log n +\occ\cdot s)$ time for an arbitrarily large alphabet\footnote{Dynamic indexes also need $O(\rho\log n)$ bits to navigate between documents, where $\rho$ is the number of documents. Since $\rho\log n$ is usually negligible in comparison to $n$, we ignore this additive term, except for Tables~\ref{tbl:best2} and ~\ref{tbl:nlogsigma}, to simplify the description.}. If the update procedure is deterministic, then queries are answered in $O((|P|\log\log n +\occ\cdot s)\log\log \sigma)$ time and updates are supported in 
$O(|T_u|\log^{\eps}n)$ worst-case time.  See Table~\ref{tbl:best2}. If  $O(n\log\sigma)$ bits of space are available, then our dynamic data structure matches the currently fastest static result of Grossi and Vitter~\cite{GrossiV05}. We can report all occurrences of 
a pattern $P$ in $O(|P|/\log_{\sigma}n + \log^{\eps}n + \occ\cdot\log^{\eps}n)$ time. 
This is the first compressed dynamic data structure that achieves $\trange=o(|P|)$ if $\sigma=n^{o(1)}$. Compared to the fastest previous  data structure that needs the same space, we achieve $O(\log n\log_{\sigma}n)$ factor improvement in query time.  A variant of this data structure with deterministic update procedure answers queries in $O(|P|(\log\log n)^2/\log_{\sigma}n + \log n + \occ\cdot\log^{\eps}n)$ time. See Table~\ref{tbl:nlogsigma}.

Our data structures can also count occurrences of a pattern $P$ 
in $O(\tcount)$ time. For previously described indexes $\tcount=\trange$. In our case, $\tcount=\trange+\log n/\log \log n$ or $\tcount=(\trange+\log n/\log\log n)\log\log n$. 
Times needed to answer a counting query are listed in  Table~\ref{tbl:count}.
However, if our data structures support counting queries, then update times grow slightly, as shown in Table~\ref{tbl:count}.


All of the above mentioned results are obtained as corollaries of two general transformations.
Using these transformations, that work for a very broad class of indexes, we can immediately turn almost any  static data structure with good pre-processing time into an index for a dynamic collection of texts. 
The query time  either remains the same or increases by a very small multiplicative factor. 
Our method can be applied to other problems where both compressed representation  and dynamism are desirable. 

\paragraph{Binary Relations and Graphs}
One important area where our techniques can  also be used is compact representation of directed graphs and binary relations. Let $R\subseteq L\times O$ be a binary relation between labels from a set $L$ and objects from a set $O$. Barbay et al.~\cite{BHMR07}
describe a compact representation of a static binary relation $R$ (i.e., the set of object-label pairs) that consists of a sequence $S_R$ and a bit sequence $B_R$. $S_R$  contain the list of labels related to different objects and is ordered by object. That is, $S_R$ lists all labels related to an object $o_1$, then all labels related to an object $o_2$, etc. The binary sequence $B_R$ contains unary-encoded numbers of labels related to objects $o_1$, $o_2$, $\ldots$. Barbay et al~\cite{BHMR07} showed how $S_R$ and $B_R$ can be used to support basic queries on binary relations, such as listing or counting all labels related to an object, listing or counting all objects related to a label, and telling whether a label and an object are related. Their method reduces queries on a binary relation $R$ to rank, select, and access queries on $S_R$ and $B_R$. Another data structure that stores a static binary relation and uses the same technique is described in~\cite{BarbayCGNN14}. Static compact data structures described in~\cite{BHMR07,BarbayCGNN14} support queries in $O(\log\log \sigma_l)$ time per reported datum, where $\sigma_l$ is the number of labels. For instance, we can report  labels related to an object (resp. objects related to a label) in $O((k+1)\log\log\sigma_l)$ time, where $k$ is the number of reported items; we can tell whether an object and a label are related in $O(\log\log \sigma_l)$ time. 
In~\cite{NavarroN13}, the authors describe a dynamization of this approach that relies on  dynamic data structures answering rank and select queries on dynamic strings $S_R$ and $B_R$. Again the lower bound on dynamic rank  queries sets the limit on the efficiency of this approach. Since we need $\Omega(\log n/\log \log n)$ time to answer rank queries, the data structure of Navarro and Nekrich~\cite{NavarroN13} needs $O(\log n/\log\log n)$ 
time per reported item, where $n$ is the number of object-label pairs in~$R$. Updates are supported in $O(\log n/\log\log n)$ amortized time and the space usage is $nH+\sigma_l\log\sigma_l + t\log t +O(n+\sigma_l\log \sigma_l)$ where $n$ is the number of pairs, $H$ is the zero-order entropy of the string $S_R$, $\sigma_l$ is the number of labels and $t$ is the number of objects.In~\cite{NavarroN13} the authors also show that we can answer basic adjacency and neighbor queries on a directed graph by regarding a
graph as a binary relation between nodes. Again reporting and counting out-going and in-going neighbors of a node can be performed in $O(\log n)$ time per delivered datum.

In this paper we show how our method for dynamizing compressed data structures can be applied to binary relations and graphs. Our data structure supports reporting labels related to an object or reporting objects related to a label in $O(\log \log n\cdot \log\log \sigma_l)$ time per reported datum.  We support counting queries in $O(\log n)$ time and updates in $O(\log^{\eps}n)$ worst-case time. The same query times are also achieved for the dynamic graph representation. The space usage of our data structures is dominated by $nH$ where $n$ is the number of pairs in a binary relation or the number of edges in a graph and $H$ is the zero-order entropy of the string $S_R$ defined above. Thus the space usage of our data structure matches that of~\cite{NavarroN13} up to lower-order factors. At the same time we show that reporting queries in a dynamic graph can be supported without dynamic rank and select queries. 

\begin{table*}[tb]
\centering
\resizebox{.8\textwidth}{!}{
  \begin{tabular}{|c|l|l|l|l|l|l|} \hline
    Ref. &  $\trange$ & $\tlocate$ & $\textract$ & Update & $\sigma$ \\
         &           &            &             &   &           \\ \hline
\cite{GrossiV05}  & $O(|P|/\log_{\sigma} n+\log^{\eps}n)$ & $O(\log^{\eps} n)$ & $O(\ell/\log_{\sigma}n)$ & static &  \\
\cite{CHLS07} &  $O(|P|\log n)$ & $O(\log^2 n)$ & $O((\log n+\ell)\log n)$ & $O(|T_u|\log n)$ & $\mathrm{const}$ \\
\cite{NavarroN13} & $O(|P|\log n)$ & $O(\log n\log_{\sigma}n)$ & 
$O((\log_{\sigma} n + \ell)\log n)$ & $O(|T_u|\log n)$ & \\
Our & $O(|P|/\log_{\sigma} n+\log^{\eps}n)$ & $O(\log^{\eps} n)$ & $O(\ell/\log_{\sigma}n)$ & $O(|T_u|\log^{\eps}n)^{\bR}$ &  \\ 
Our & $O(|P|(\log\log n)^2/\log_{\sigma} n+\log n)$ & $O(\log^{\eps} n)$ & $O(\ell/\log_{\sigma}n)$ & $O(|T_u|\log^{\eps}n)$ &  \\ 

\hline
\end{tabular}
}
\caption{$O(n\log\sigma)$-bit indexes. Dynamic data structures need additional $\rho\log n$ bits. Randomized update costs are  marked with $\bR$. }
\label{tbl:nlogsigma}
\end{table*}

\begin{table*}[tbh]
\centering
\resizebox{.8\textwidth}{!}{
  \begin{tabular}{|c|l|l|l|l|l|l|} \hline
   Space  & Counting & Updates & $\sigma$ \\ \hline
 $nH_k+o(n\log\sigma)$ & $O(|P|\log\log n +\log n)$  &  $O(|T_u|\log n)$ &  $\log^{\mathrm{const}}n$\\
$nH_k+o(n\log\sigma)$ & $O((|P|\log\log\sigma\log\log n +\log n)$ & $O(|T_u|\log n)$ &  \\ 
$nH_k+o(n\log\sigma)$ & $O((|P|\log\log n +\log n)$ & $O(|T_u|\log n)^{\bR}$ &  \\ 
$O(n\log\sigma)$ & $O(|P|/\log_{\sigma} n+\log n/\log\log n)$ &  $O(|T_u|\log n)^{\bR}$ &  \\ 
$O(n\log\sigma)$ &  $O(|P|(\log\log n)^2/\log_{\sigma} n+\log n)$ &
$O(|T_u|\log n)$ &  \\ \hline
  \end{tabular}
}
\caption{Costs of counting queries for our data structures. Randomized update costs are  marked with $\bR$. The first three rows correspond to the last three rows in Table~\ref{tbl:best2}, the last two rows correspond to the last two rows in Table~\ref{tbl:nlogsigma}.}
\label{tbl:count}
\end{table*}

\paragraph{Overview}
The main idea of our approach can be described as follows. 
The input data is distributed among several data structures. 
We maintain a fraction of the data in an uncompressed data structure that supports both insertions and deletions. 
We bound the number of elements stored in uncompressed form 
so that the total space usage of the uncompressed data structure 
is affordable.
Remaining data is kept in several compressed data structures that do not support updates.  
New elements (respectively new documents) are always inserted 
into the uncompressed data structure. Deletions from the 
static data structures are implemented by the lazy deletions mechanism: when a deletion takes place, then the deleted  element (respectively the document) is marked as deleted. We keep positions of marked elements in a data structure, so that all elements in a query range that are not marked as deleted can be reported in $O(1)$ time per element. When a static structure contains too much obsolete data (because a certain fraction of its size is marked as deleted), then this data structure is purged: we create a new instance of this data structure that does not contain deleted elements. If the 
uncompressed data structure becomes too big, we  move its content  into  a (new) compressed data structure. 
Organization of compressed data structures is inspired by 
the logarithmic method, introduced by Bentley and Saxe~\cite{BentleyS80}: the size of compressed data structures increases geometrically. 
We show that re-building procedures can be scheduled in such 
way that only a small fraction of data is kept in uncompressed form at any given time. Since the bulk of data is kept in  static data structures, our approach can be viewed as  a general framework that transforms static compressed data structures into dynamic ones.

In Section~\ref{sec:dyncoll} we describe Transformation~\ref{trans:trans1}; Transformation~\ref{trans:trans1}, based on the approach outlined above,  can be used to turn a static indexing data structure  into a data structure for dynamic collection of documents with amortized update cost. The query costs of the obtained dynamic data structure are the same as in the underlying static data structure. 
In Section~\ref{sec:worst} we describe Transformation~\ref{trans:trans1worst}
that turns a static indexing data structure into a dynamic data structure with worst-case update costs. 
We use more sophisticated division into sub-collections and slightly different re-building procedures in Transformation~\ref{trans:trans1worst}. 
In Section~\ref{sec:dynind} we describe how to obtain new solutions of the dynamic indexing problem using our static-to-dynamic transformations. Finally Section~\ref{sec:dyngraph} contains our data structures for dynamic graphs and binary relations.

\section{Dynamic Document Collections}\label{sec:dyncoll}
In this section we show how a static compressed index $\cI_s$ can be transformed into a dynamic index $\cI_d$.
 $\cC$ will denote a collection of texts $T_1,\ldots, T_{\rho}$. We say that an index $\cI_s$ is $(u(n), w(n))$-constructible if there is an algorithm that uses $O(n\cdot w(n))$ additional workspace and constructs $\cI_s$ in $O(n\cdot u(n))$ time. Henceforth we make the following important assumptions about the static index $\cI_s$.  $\cI_s$ needs at most $|S|\phi(S)$ bits of space for any symbol sequence $S$ and the function $\phi(\cdot)$ is monotonous: if any sequence $S$ is a concatenation of~$S_1$ and $S_2$, then $|S|\phi(S)\ge |S_1|\phi(S_1)+|S_2|\phi(S_2)$. 
We also assume that $\cI_s$ reports occurrences of a substring in $\cC$ using the two-step method described in the introduction:
first we identify the range $[a,b]$ in the suffix array, such that all suffixes that start with $P$ are in $[a,b]$; then we find the positions of suffixes from $[a,b]$ in the document(s).  These operations will be called range-finding and locating. 
Moreover the rank of any suffix $T_i[l..]$ in the  suffix array can be found in time $O(\tsa)$.
The class of indexes that satisfy these conditions includes all indexes that are based on compressed suffix arrays or the Burrows-Wheeler transform. Thus the best currently known static indexes can be used in Transformation~\ref{trans:trans1} and the following transformations described in this paper. 

\tolerance=1000
 Our result can be stated as follows.
\begin{transform}
\label{trans:trans1}
Suppose that there exists a static $(u(n), w(n))$-constructible index $\cI_s$ that uses $|S|\phi(S)$ space for any document collection $S$. Then there exists a dynamic index $\cI_d$ that uses $|S|\phi(S)+O(|S|(\frac{\log \sigma}{\tau}+w(n)+\frac{\log\tau}{\tau}))$ space for a parameter $\tau=O(\log n/\log \log n)$; $\cI_d$ supports insertions and deletions of documents in  $O(u(n)\log^{\eps}n)$ time per symbol and $O(u(n)\cdot \tau+\tsa+\log^{\eps}n)$ time per symbol respectively. Update times are amortized. The asymptotic costs of range-finding, extracting, and locating are the same in $\cI_s$ and $\cI_d$. 
\end{transform}
We start by  showing how to turn  a static index into a semi-dynamic deletion-only index using $O((n/\tau)\log\tau)$ additional bits.  Then we will show how to turn a semi-dynamic index into a fully-dynamic one. 

\paragraph{Supporting Document Deletions}
We keep a bit array $B$ whose entries correspond to positions 
in the suffix array $SA$ of~$\cC$. $B[j]=0$ if $SA[j]$ is a suffix of some text $T_f$, such  that $T_f$ was already deleted from $\cC$ and $B[j]=1$ otherwise.  We keep a data 
structure $V$ that supports the following operations on $B$: $zero(j)$ sets the $j$-th bit in~$B$ to~$0$; $report(j_1,j_2)$ reports all $1$-bits in $B[j_1..j_2]$.  $V$ is implemented using Lemma~\ref{lemma:fastbit2}, so that  $zero(i)$ is supported in $O(\log^{\eps}n)$ time and  $report(j_1,j_2)$ is answered in $O(k)$ time, where $k$ is the number of output bit positions.  If $B$ contains at most $n/\tau$ zeros, then $B$ and $V$ need only $O((n\log \tau)/\tau)$ bits. Lemma~\ref{lemma:fastbit2} is proved in Section~\ref{sec:fastbit}. 

When a document $T_f$ is deleted, we identify the positions of~$T_f$'s suffixes in~$SA$ and set the corresponding bits in~$B$ to~$0$. When the number of symbols in deleted documents equals $(n/\tau)$, we re-build the index for $\cC$ without deleted documents in $O(n\cdot u(n))$ time. The total amortized cost of deleting a document is $O(u(n)\tau +\tsa+\log^{\eps}n)$ per symbol. To report occurrences of 
some string $P$ in $\cC$, we identify the range $[s..e]$ such that all suffixes in $SA[s..e]$ start with $P$ in $O(\trange)$ time. Using $V$, we 
enumerate all $j$, such that $s \le j\le e$ and $B[j]=1$. 
For every such $j$, we compute $SA[j]$ in $O(\tlocate)$ time.

\paragraph{Fully-Dynamic Index}
We split $\cC$ into a constant number of sub-collections $\cC_0, \cC_1,\ldots, \cC_r$ such that 
$|\cC_i|\le \max_i$ for all $i$. The maximum size of the $i$-th sub-collection, $\max_i$, increases geometrically: $\max_0=2n/\log^2n$ and $\max_i= 2(n/\log^2n)\log^{\eps\cdot i}n$ for a constant $\eps>0$; see Fig.~\ref{fig:cascad1}.
There is no lower bound on the number of symbols in a sub-collection $\cC_i$; for instance, any $\cC_i$ can be empty.
Our main idea is to store  $\cC_0$ in uncompressed form  and $\cC_i$ for $i\ge 1$  in semi-static deletion-only data structures.
Insertions into $\cC_i$ for $i\ge 1$ are supported by re-building the semi-static index of~$\cC_i$. 
We also re-build all sub-collections when the total number of elements is increased by a constant factor (global re-build).
\todo{should I use $size(i)$ or $|\cC_i|$ to define $\max_i$?}
\begin{figure}[tb]
  \centering
  \includegraphics[height=.2\textheight]{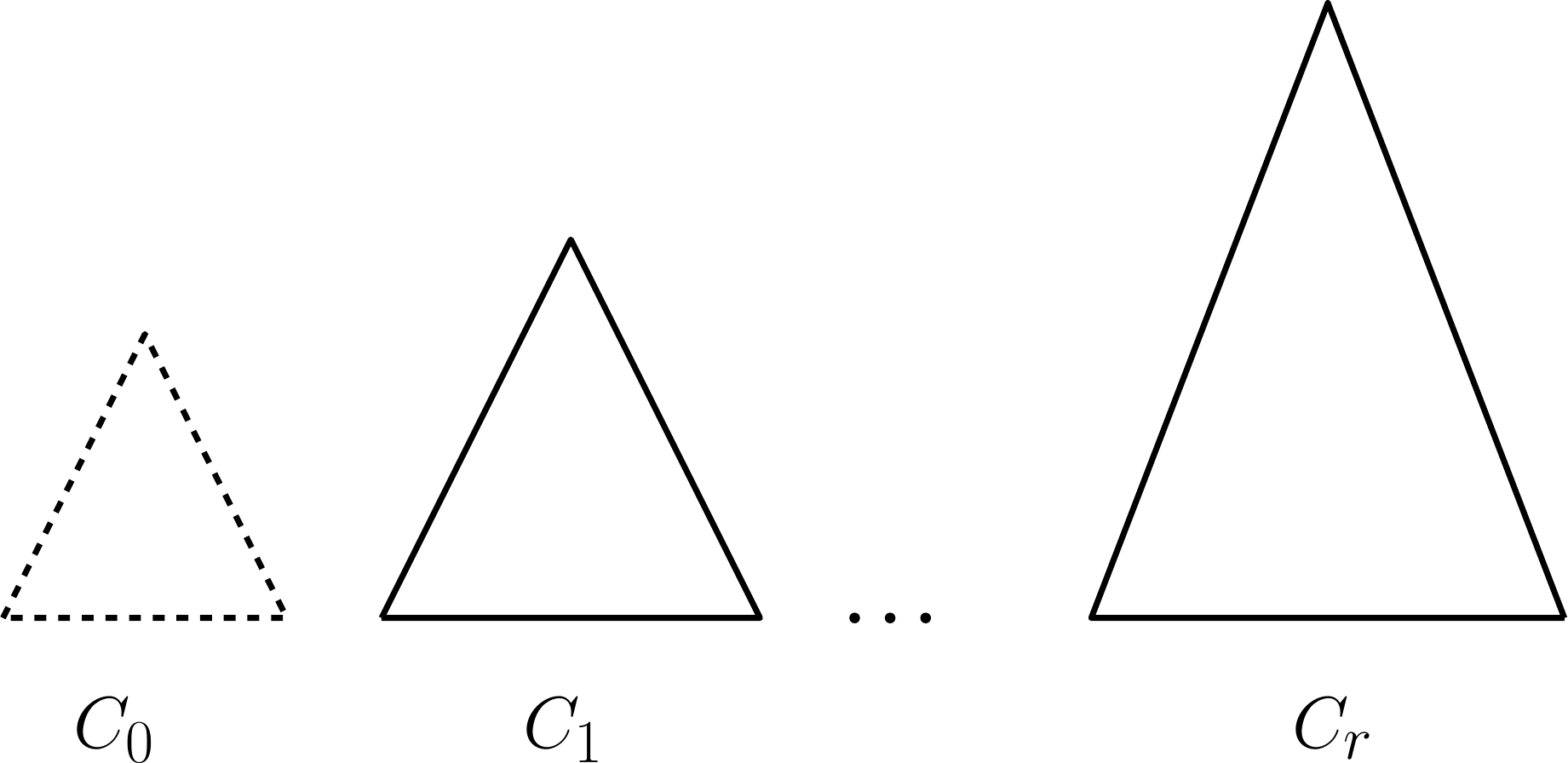}
  \caption{Sub-collections $\cC_i$ for dynamizing  a deletion-only index. A data structure for $\cC_0$ is fully-dynamic and stores documents in uncompressed form.}
  \label{fig:cascad1}
\end{figure}
  
We store the document collection $\cC_0$ in uncompressed form. 
Suffixes of all documents in $\cC_0$ are kept in an (uncompressed) suffix tree $\cD_0$.  We can insert a new text $T$ into $\cD_0$ or delete $T$ from $\cD_0$ 
in $O(|T|)$ time. Using $\cD_0$, all occurrences of a pattern $P$ in~$\cC_0$ can be reported in $O(|P|+\occ)$ time. 
Since $|\cC_0|\le 2n/\log^2n$, we need $O(n/\log n)$ bits to store $\cC_0$.  For completeness we will describe the data structure for $\cC_0$ \shlongver{in the full version of this paper}{in Section~\ref{sec:uncompr}.} 

Every $\cC_i$ for $i\ge 1$ is kept in a semi-dynamic data structure described in the first part of this section.  Let $size(i)$ denote the total length of all undeleted texts in $\cC_i$. When a new document $T$ must be inserted, we find the first (smallest) collection $\cC_j$ such that $\sum_{i=0}^j size(i)+|T|\le \max_j$ where $\max_j=2(n/\log^2n)\log^{\eps\cdot j}n$. 
That is, we find the first subcollection $\cC_j$ that can accommodate the new text $T$ and all preceding subcollections without 
exceeding the size limit. If $j=0$, we insert the new text into $\cC_0$. Otherwise, if $j\ge1$, we discard the old indexes for all $\cC_i$ where $0\le i\le j$,  set $\cC_j=(\cup_{i=0}^j \cC_i)\cup T$ and construct a new semi-static index for $\cC_j$.  If $\sum_{i=0}^j size(i) +|T|> \max_j$ for all $j$, we start a global re-build procedure: all undeleted texts from old sub-collections are moved to the new sub-collection $\cC_r$ and parameters $\max_i$ are re-calculated; \no{used to determine maximal sizes of sub-collections} new sub-collections $\cC_i$ for $0\le i< r$ are initially empty after the global re-build. 

We start a global re-build procedure when the total number of elements is at least doubled. Hence, the amortized cost of a global re-build is $O(u(n))$. The amortized cost of re-building sub-collections can be analyzed as follows. 
When a sub-collection $\cC_j$ is re-built, we insert all symbols from subcollections $\cC_i$, $0\le i<j$ and the new text $T$ into $\cC_j$. Our insertion procedure guarantees that $\sum_{i=1}^{j-1}size(j)+|T|> \max_{j-1}$. 
We need $O(\max_j\cdot u(n))$ time to construct a new index for $\cC_j$. The cost of re-building $\cC_j$ can be distributed among the new text symbols inserted into $\cC_j$. Since $\max_{j-1}=\max_j/\log^{\eps}n$, the amortized cost of inserting a new symbol into $\cC_j$ is $O(u(n)\cdot\log^{\eps}n)$.
Every text is moved at most once  to any subcollection $\cC_j$ for any $j$ such that $1\le j \le \ceil{2/\eps}$.
Hence the total amortized cost of an insertion is $O((1/\eps)u(n)\cdot\log^{\eps}n)$  per symbol.    

A query is answered by querying all non-empty sub-collections $\cC_i$ for $i=0,1,\ldots, r$.
Since $r=O(1)$, query times are the same as in the underlying static index. 
Splitting a collection into sub-collection does not increase the space usage because the function $\phi(\cdot)$ is monotonous. We need $O((n/\tau)\log \tau)$ bits to keep data structures $V$ for all $\cC_i$. Another $O(nw(n))$ bits are needed for global and local re-builds. Finally we need $O((n/\tau)\log\sigma)$ bits to store the symbols from deleted documents. Since there are no more than $O(n/\tau)$ deleted symbols, we use $O((n/\tau)\log\sigma)+o(n\log\sigma)$ additional bits to store them; a more detailed analysis \shlongver{will be given in the full version.}{is given in Section~\ref{sec:spacean}.} Hence, the total space overhead of our dynamic index is $O(n(w(n)+(\log\sigma+\log\tau)/\tau))$.

\longver{A data structure with faster insertions and slightly higher query time can be obtained by increasing the number of sub-collections $\cC_i$ to $O(\log\log n)$.
We describe this variant of our method in Appendix~\ref{sec:fasteramortized}.}

\section{Worst-Case Updates}
\label{sec:worst}
In this section we will prove the following result.
\begin{transform}
\label{trans:trans1worst}
Suppose that there exists a static $(u(n), w(n))$-constructible index $\cI_s$ that uses $|S|\phi(S)$ space for any document collection $S$. Then there exists a dynamic index $\cI_d$ that uses $|S|\phi(S)+O(|S|\frac{\log\sigma+\log\tau+w(n)}{\tau})$ space for any parameter $\tau=O(\log n/\log \log n)$; $\cI_d$ supports insertions and deletions of documents in  $O(u(n)\log^{\eps}n)$ time per symbol and $O(u(n)\cdot (\log^{\eps}n+\tau \log \tau)+\tsa)$ time per symbol respectively. The asymptotic costs of range-finding increases by $O(\tau)$; the costs of  extracting and locating are the same in $\cI_s$ and $\cI_d$. 
\end{transform}
 We use the index of Transformation~\ref{trans:trans1} as the starting point.
First we give an overview of our data structure and show how queries can be answered.  Then we describe the procedures for text  insertions and deletions. 
\paragraph{Overview}
The main idea of supporting updates in worst-case is to maintain several copies of the same sub-collection. An old copy of~$\cC_j$ is locked while a new updated version of~$\cC_{j+1}$ that includes $\cC_j$ is created in the background. When a new version of~$\cC_{j+1}$ is finished, we discard an  old locked sub-collection. 
When a new document $\cT$ must be inserted, we insert it into $\cC_0$ if $|\cC_0|+|T|\le \max_0$.
Otherwise we look for the smallest $j\ge 0$, such that $\cC_{j+1}$ can accommodate both $\cC_{j}$ and $T$; then we move both 
$T$ and all documents from $\cC_j$ into $\cC_{j+1}$\footnote{Please note the difference between Transformations~\ref{trans:trans1} and~\ref{trans:trans1worst}. In Transformation~\ref{trans:trans1} we look for the sub-collection $\cC_j$ that can accommodate the new document and \emph{all} smaller sub-collections $\cC_0$, $\ldots$, $\cC_{j-1}$. In Transformation~\ref{trans:trans1worst} we look for the sub-collection $\cC_{j+1}$ that can accommodate that can accommodate the new document and the preceding sub-collection $\cC_j$. We made this change in order to avoid some technical complications caused by delayed re-building.}. If the new document $T$ is large, $|T|\ge \max_j/2$, we can afford to re-build $\cC_{j+1}$ immediately after the insertion of~$T$. 
If the size of~$T$ is smaller than $\max_j/2$, re-building of~$\cC_{j+1}$ is postponed. For every following update, we spend $O(\log^{\eps}n\cdot u(n))$ time per symbol on creating the new version of~$\cC_{j+1}$. The old versions of~$\cC_j$, $\cC_{j+1}$ are retained until the new version is completed. 
If the number of symbols that are marked as deleted in $\cC_j$ exceeds $\max_j/2$, we employ the same procedure for moving $\cC_j$ to $\cC_{j+1}$: $\cC_j$ is locked and  we start the process of constructing a new version $\cC_{j+1}$ that contains all undeleted documents from $\cC_j$.

The disadvantage of delayed re-building is that we must keep two copies of every document in $\cC_j\cup \cC_{j+1}$ until new $\cC_{j+1}$ is completed. In order to reduce the space usage, we keep only a fraction of all documents in sub-collections $\cC_i$. All $\cC_i$ for $0\le i\le r$ will contain  $O(n/\tau)$ symbols, where $\tau$ is the parameter determining the trade-off between space overhead and query time. The remaining documents are kept in top sub-collections $\cT_1$, $\ldots$, $\cT_g$ where $g\le 2\tau$. 
Top sub-collections are constructed using the same delayed  approach. But once $\cT_i$ is finished, no new documents are inserted 
into $\cT_i$. We may have to re-build a top collection or merge it with another $\cT_j$ when the fraction  of deleted symbols exceeds a threshold value $1/\tau$.  We employ the same rebuilding-in-the-background approach. However, we will show that the background procedures for maintaining $\cT_i$ can be scheduled in such a way that only one $\cT_j$ is re-built at any given moment. Hence, the total  space 
overhead due to re-building and storage of deleted elements is bounded by an additive term $O(n(\log \sigma+ w(n))/\tau)$.

\paragraph{Data Structures}
We split  a document collection $\cC$ into subcollections $\cC_0$, $\cC_1$, $\ldots$, $\cC_r$, $\cL_1$, $\ldots$, $\cL_r$ and top subcollections $\cT_1$, $\ldots$, $\cT_g$ where $ g=O(\tau)$. We will also use auxiliary collections $\cN_1$, $\ldots$, $\cN_{r+1}$ and temporary collections $Temp_1$, $\ldots$, $Temp_r$. $Temp_i$ are also used to answer queries but each non-empty $Temp_i$ contains exactly one document; $Temp_i$ are used as temporary storage 
for new document that are not yet inserted into ``big'' collections. 
The sizes of sub-collections can be defined as a function of parameter $n_f$ such that $n_f=\Theta(n)$; the value 
of~$n_f$ changes when $n$ becomes too large or too small. Let $\max_i=2(n_f/\log^2n)\log^{i\eps}n$.
We maintain the invariant $|\cC_i|\le \max_i$ for all $i$, $0\le i \le r$,  but $r$ is chosen in such way that $n_f/\log^{2-r\eps}n_f=n_f/\tau$. 
Every $\cT_i$ contains  $\Omega(n_f/\tau)$ symbols.  If $\cT_i$ contains 
more than one text, then its size is at most $4n_f/\tau$; otherwise $\cT_i$ can be arbitrarily large. When a collection $\cC_j$ is merged with $\cC_{j+1}$, the process
of re-building $\cC_j$ can be distributed among a number of future updates (insertions and deletions of documents). During this time $\cC_j$ is locked: we set $\cL_j=\cC_j$ and 
initialize a new empty sub-collection $\cC_j$. When a new subcollection $\cN_{j+1}=\cC_{j+1}\cup\cC_j$ is completed, we set $\cC_{j+1}=\cN_{j+1}$ and discard old $\cC_{j+1}$ and $\cL_j$. A query is answered by 
querying all non-empty $\cC_i$, $\cL_i$, $Temp_i$, and $\cT_i$. Therefore the cost of answering a range-finding query grows by $O(\tau)$. The costs of locating and extracting are the same as in the static index. We show main  sub-collections used by our method on Figure~\ref{fig:cascad-worst}. 
\begin{figure*}[tb]
  \centering
  \includegraphics[width=.6\textwidth]{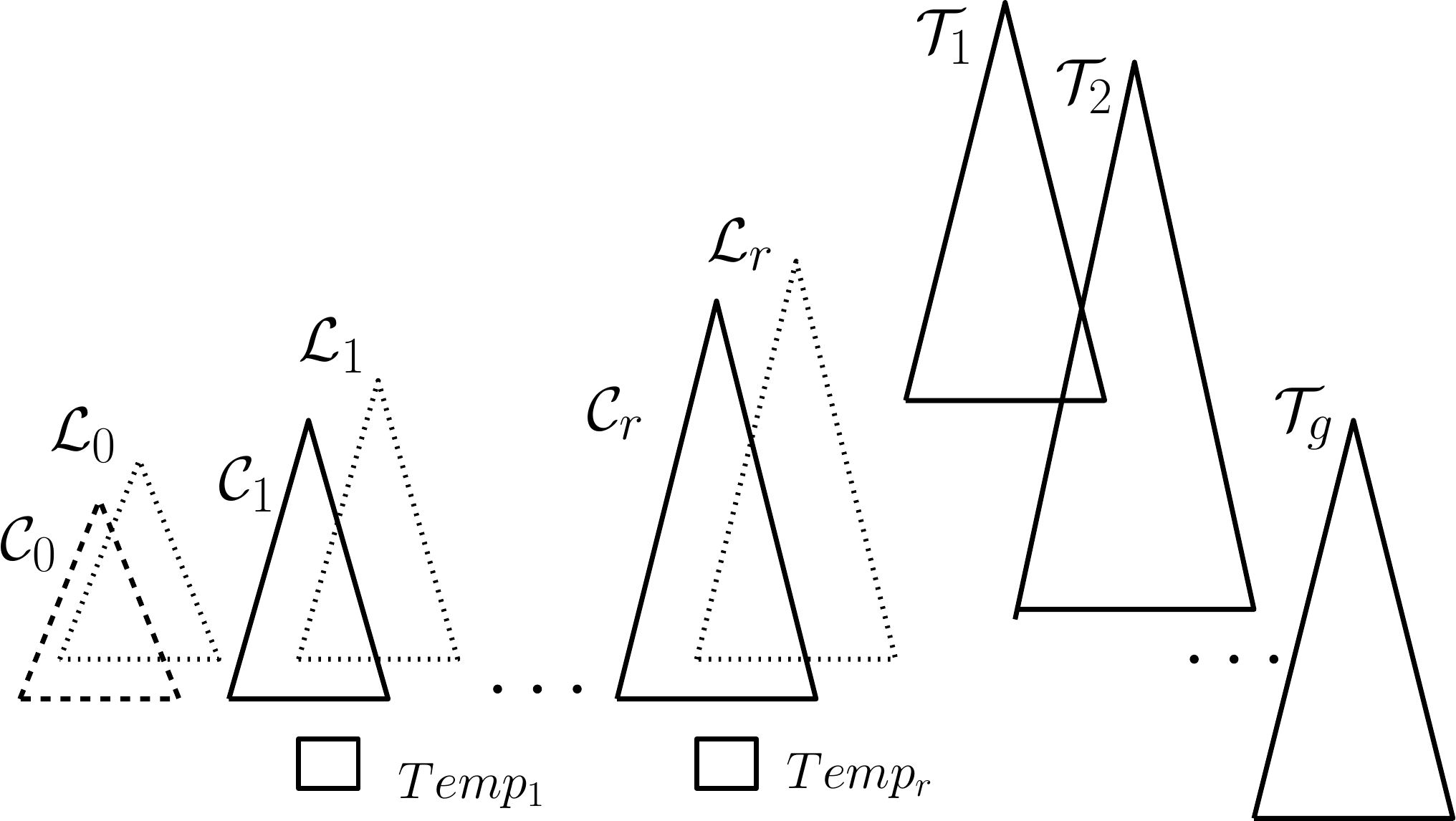}
  \caption{Dynamization with worst-case update guarantees. Only main sub-collections used for answering queries are shown. $\cL'_r$ and auxiliary 
collections $\cN_i$ are not shown. }
  \label{fig:cascad-worst}
\end{figure*}

\paragraph{Insertions}
When a document $T$ is inserted, we consider all $j$, such that $0\le j \le r$ and the data structure $\cL_j$ is not empty. For every such $j$, we  spend $O(|T|\log^{\eps}n\cdot u(n))$ units of time on constructing $\cN_{j+1}$. If $\cN_{j+1}$ for some $0\le j \le r-1$ is completed, we set $\cC_{j+1}=\cN_{j+1}$ and $\cN_{j+1}=\Temp_{j+1}=\emptyset$; if $\cN_{r+1}$ is completed, we set $\cT_{g+1}=\cN_{r+1}$, increment the number of top collections $g$, and set  $\cN_{r+1}=\Temp_{r+1}=\emptyset$. 
Then we look for a sub-collection that can accommodate the new document $T$. If $|T|\ge n/\tau$, we create the index for a new sub-collection $\cT_i$ that contains a single document  $T$. 
If $|T|< n/\tau$, we look for  the smallest index $j$, such that $|\cC_{j+1}|+|\cC_j|+ |T|\le \max_{j+1}$. That is, $\cC_{j+1}$ can accommodate both the preceding sub-collection $\cC_j$ and $T$. If $|T|\ge \max_{j}/2$, we set $\cC_{j+1}=\cC_j\cup \cC_{j+1}\cup T$ 
and create an index for the new $\cC_{j+1}$ in $O(|\cC_{j+1}|\cdot u(n))=O(|T|\log^{\eps}n\cdot u(n))$ time. If $|T|<\max_j/2$, the collection $\cC_j$ is locked. We set 
$\cL_j=\cC_j$, $\cC_j=\emptyset$ and initiate the process of creating $\cN_{j+1}=\cC_j\cup \cC_{j+1}\cup T$. The cost of creating the new index for $\cN_{j+1}$ will be distributed among the next $\max_j$ update operations.  We also create a temporary static index $\Temp_{j+1}$  for  the text $T$ in $O(|T|u(n))$ time. This procedure is illustrated on Fig.~\ref{fig:insertworstcase}.
If the index $j$ is not found and 
$|\cC_i|+|\cC_{i+1}|+|T|> \max_{i+1}$ for all $i$, $0\le i < r$, we lock $\cC_r$ (that is, set $\cL_r=\cC_r$ and $\cC_r=\emptyset$) and initiate the process of constructing $\cN_{r+1}=\cL_r\cup T$. We also create a temporary index $Temp_{r+1}$ for the document $T$ in 
$O(|T|u(n))$ time.
\begin{figure*}[tb]
\centering
\begin{tabular}{ccccc}
  \includegraphics[width = .25\textwidth]{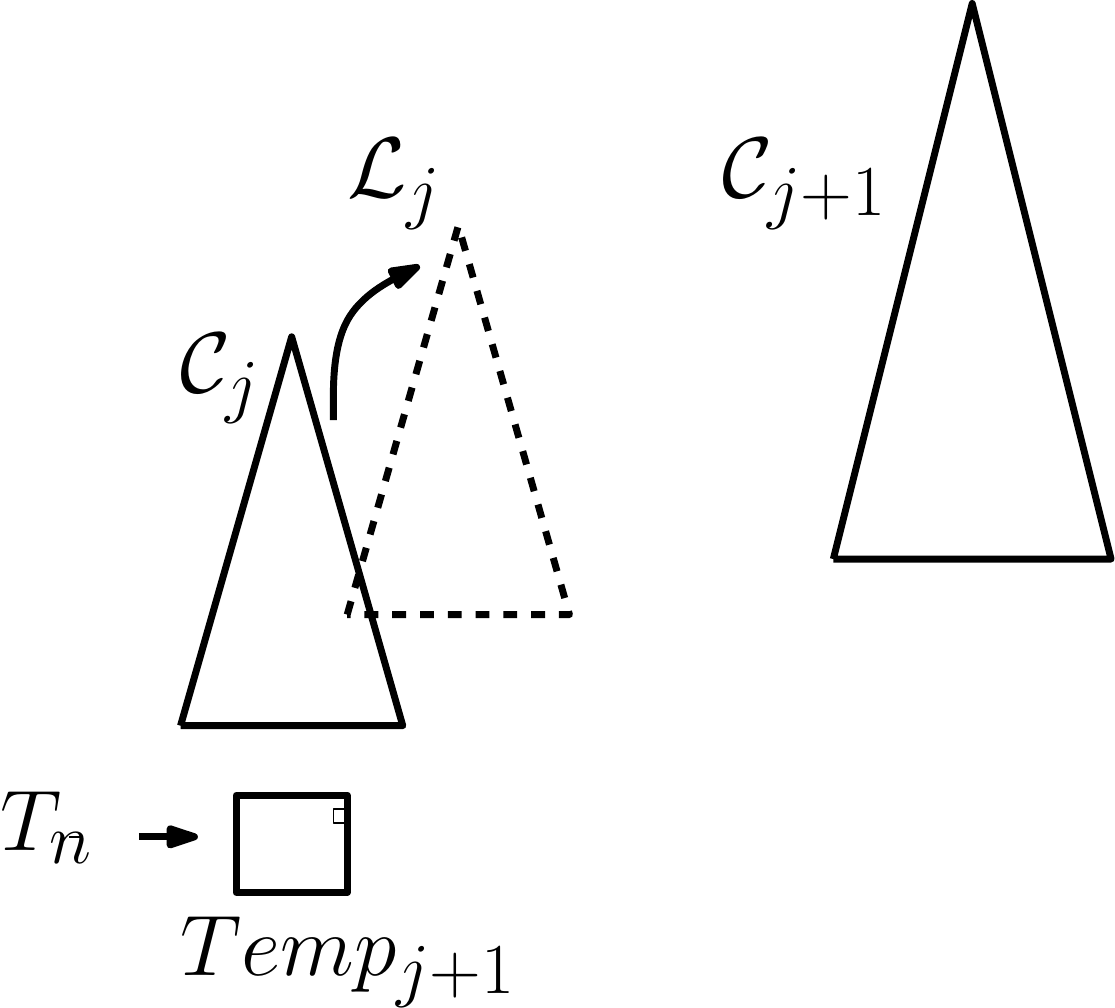}
 & \hspace*{.15cm} &
\includegraphics[width = .25\textwidth]{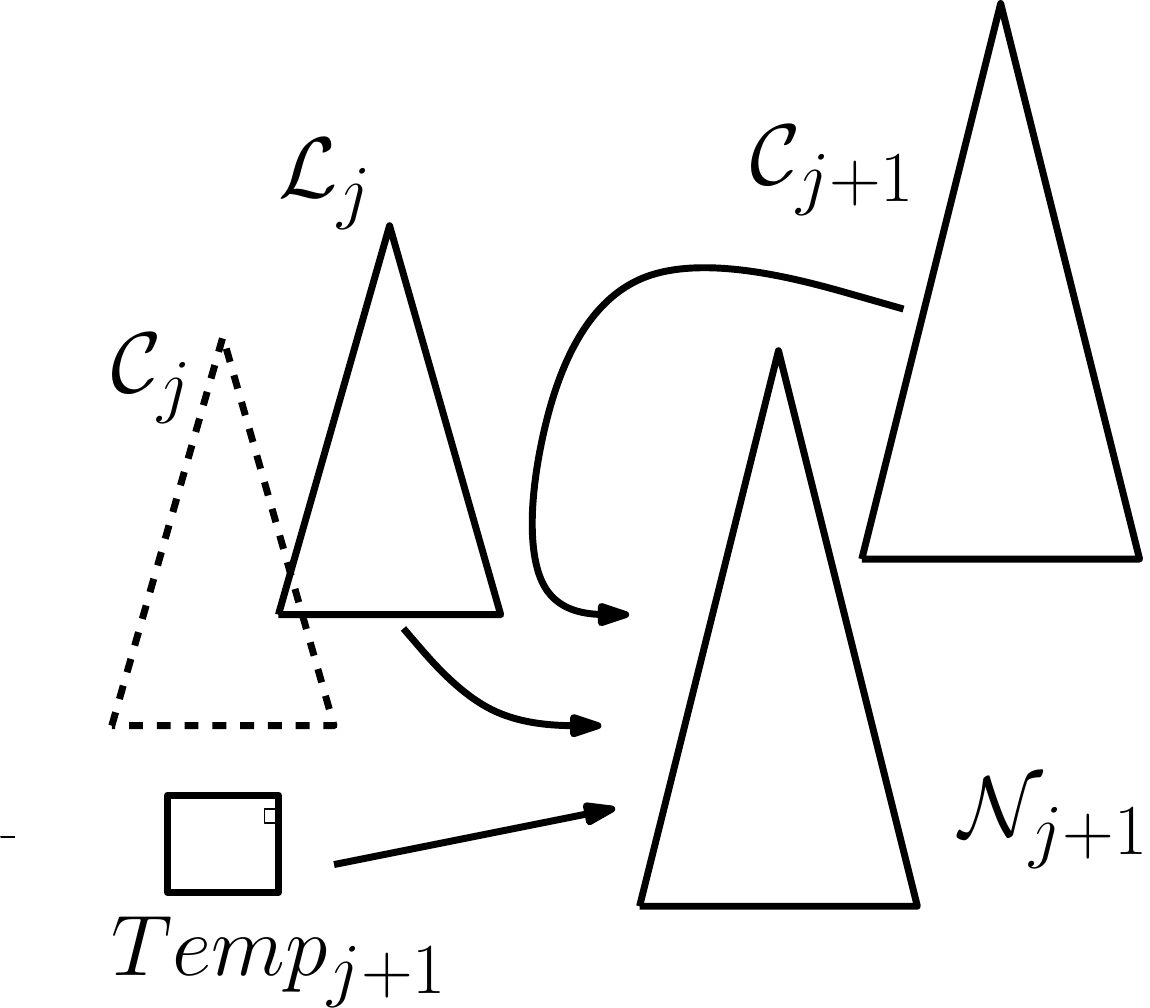} &\hspace*{.15cm} &
\includegraphics[width = .25\textwidth]{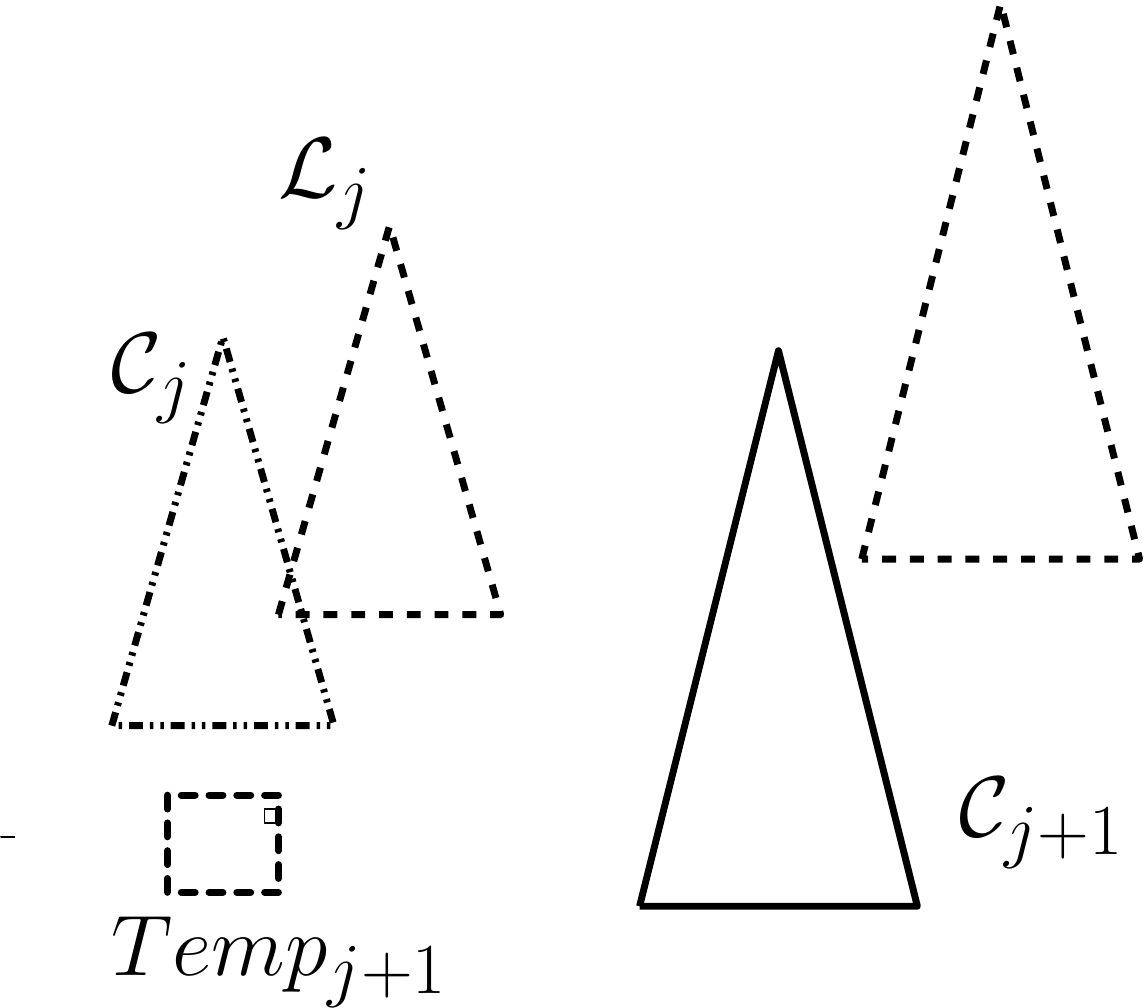} \\
 {\bf (a)}   &  & {\bf(b)} &  & {\bf (c)} \\
\end{tabular}
\caption{Suppose that $\cC_{j+1}$ is the first sub-collection that can accommodate both $\cC_j$ and a new document $T_n$. If $\cC_j$ must be  rebuilt in the background, we ``rename'' $\cC_j$ to $\cL_j$ and initialize another (initially empty) $\cC_j$.
  New document $T_n$ is put into a separate collection $Temp_{j+1}$ ({\bf a}). A background process creates a new collection $\cN_{j+1}$ that contains all documents from $\cL_j$, $\cC_{j+1}$ and $Temp_{j+1}$ ({\bf b}). When $\cN_{j+1}$ is finished, we discard $\cC_{j+1}$, $\cL_j$ and $Temp_{j+1}$, and set $\cC_{j+1}=\cN_{j+1}$ ({\bf c}). Our procedure guarantees that $\cN_{j+1}$ is completed before the new sub-collection $\cC_j$ must be re-built again.
} 
\label{fig:insertworstcase}
\end{figure*}


\paragraph{Deletions}
Indexes for sub-collections $\cC_i$, $1\le i \le r$,  and $\cT_j$, $1\le j \le g$, support lazy deletions in the same way as in Section~\ref{sec:dyncoll}: when a document is deleted from a sub-collection, we simply mark the positions of suffixes in the suffix array as deleted and set the corresponding bits in the bit vector $B$ to $0$. Augmenting an index so that lazy deletions are supported is done in exactly the same way as in Section~\ref{sec:dyncoll}. 

We will need one additional sub-collection $\cL'_r$ to support deletions. 
If a sub-collection $\cC_j$ for $1\le j \le r-1$ contains $\max_j/2$ deleted elements, we start the process of re-building $\cC_j$ and merging it with $C_{j+1}$. This procedure is the same as in the case of insertions. We lock $\cC_j$ 
by setting $\cL_j=\cC_j$ and $\cC_j=\emptyset$. The data structure $\cN_{j+1}=\cC_{j+1}\cup \cL_j$ will be re-built during the following $\max_j/2$ updates. \todo{check the last sentence!}
If a sub-collection $\cC_r$ contains $\max_r/2$ deleted symbols, we set $\cL'_r=\cC_r$ and $\cC_r=\emptyset$. The sub-collection $\cL'_r$ will be merged with the next sub-collection $\cT_i$ to be re-built. 

If a collection $\cT_i$ contains a single document and this document is deleted, then $\cT_i$ is discarded.
We also bound the number of deleted symbols in any $\cT_i$ by $n_f/\tau$. This is achieved by running the following background process. 
After each series of~$n_f/(2\tau\log \tau)$ symbol deletions, we identify $\cT_j$ that contains the largest number of deleted symbols. 
During the next $n_f/(2\tau\log\tau)$ symbol deletions we build the new index for $\cT_j$ without the deleted symbols.
At the same time  we remove the deleted symbols from $\cL'_r$ if $\cL'_r$ exists. 
If $\cL'_r$ exists and contains at least $n_f/2\tau$ undeleted symbols, we create an index for a new sub-collection $\cT'_{g+1}$ and increment the number $g$  of top collections.  If $\cL'_r$ exists, but contains less than
$n_f/2$ undeleted symbols, we merge $\cL'_r$ with the largest $\cT_j$ that contains more than one document and split the result if necessary: if the number of undeleted symbols in $\cL'_r\cup T_j$ does not exceed $2n_f/\tau$, we construct an index for $T_j\cup \cL'_r$ without deleted symbols; otherwise, we split $T_j\cup \cL'_r$ into two parts $T^1_j$, $T^2_j$ and create indexes for 
the new sub-collections. Our method  guarantees us that the number of deleted elements in any collection $\cT_i$  does 
not exceed $O(n_f/\tau)$ as follows from a Theorem of Dietz and Sleator~\cite{DietzS87}.
\begin{lemma}[\cite{DietzS87}, Theorem 5]
  \label{lemma:dietz}
Suppose that $x_1$,$\ldots$, $x_g$ are variables that are initially zero. Suppose that the following two steps are iterated: 
(i) we add a non-negative real value $a_i$ to each $x_i$ such that $\sum a_i=1$ (ii) set the largest $x_i$ to $0$. 
Then at any time $x_i\le 1+ h_{g-1}$ for all $i$, $1\le i\le g$, where $h_i$ denotes the $i$-th harmonic number. 
\end{lemma}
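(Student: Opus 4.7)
\emph{Proof plan.} I would argue via a potential/amortized analysis based on the sorted values. Let $y_1(t) \ge y_2(t) \ge \ldots \ge y_g(t)$ denote the values $x_1,\ldots,x_g$ sorted in nonincreasing order after round $t$. The goal is to show $y_1(t) \le 1 + h_{g-1}$ for every $t$. Note that step (ii) always sends $y_1$ to $0$ and promotes the old $y_2,\ldots,y_g$ (together with the reset variable, now $0$) to become the new sorted sequence; step (i) adds a total of $1$ distributed across the variables, so in particular $y_1$ can increase by at most $1$ in one round.

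My first attempt would be the ranked potential $\Phi = \sum_{k=1}^g y_k/k$. Step (i) adds at most $\sum_i a_i/\text{rank}(i) \le \sum a_i = 1$ to $\Phi$, since every increment of $a_i$ is attributed to a rank $\ge 1$ (modulo the fact that rank-swaps during the round only decrease $\Phi$, as sorting minimizes $\sum y_k/k$ among all permutations). Step (ii) removes the rank-$1$ contribution $y_1$ and then reindexes: each surviving value $y_{k+1}$ moves from rank $k+1$ to rank $k$, so its coefficient changes by $\frac{1}{k}-\frac{1}{k+1}=\frac{1}{k(k+1)}$. Thus the net change of $\Phi$ in step (ii) is $-y_1 + \sum_{k=1}^{g-1} y_{k+1}/(k(k+1))$. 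Using $y_{k+1} \le y_1$ telescopes the sum to at most $y_1(1-1/g)$, so the step (ii) change is at most $-y_1/g$. Combining, the steady state forces $y_1/g \le 1$, which is too weak; a refined bookkeeping is therefore needed.

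The correct refinement, and what I would carry out in detail, is to replace the uniform weights $1/k$ by the Dietz–Sleator weights tailored to the harmonic bound. Concretely I would set $\Phi = \sum_{k=1}^g c_k\, y_k$ with coefficients $c_k$ chosen so that (a) step (i) increases $\Phi$ by at most $c_1 = 1$ in the worst case, and (b) the reindexing in step (ii) exactly compensates the gain from step (i) provided $y_1 > 1 + h_{g-1}$. The right choice turns out to be $c_k = 1/(g-k+1)$ (so that $c_1 = 1/g$, $c_g = 1$, and the total weight telescopes to $h_g$). With this potential the step-(i) gain is at most $c_1 \cdot 1 = 1/g$ (assuming the adversary puts all of $a$ on the top rank), while step (ii) decreases $\Phi$ by at least $(c_g - c_{g-1})\cdot 0 + \ldots + (c_1 y_1 - c_1 \cdot 0) - (\text{reindex credit})$, and a direct calculation shows $\Delta\Phi \le 0$ whenever $y_1 \ge 1 + h_{g-1}$.

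\textbf{Main obstacle.} The delicate step is the second one: the reindexing after a reset shifts \emph{every} surviving variable up one rank, and the amount of potential ``released'' depends on all the $y_k$'s, not just $y_1$. I would handle this by writing $\Delta\Phi$ in step (ii) as a telescoping sum and then using $y_{k+1} \le y_k$ together with the boundary conditions to bound the release by a linear function of $y_1$. Once that is done, equilibrium of the amortized inequality $\Delta\Phi_{(\text{i})} + \Delta\Phi_{(\text{ii})} \le 0$ yields $y_1 \le 1 + h_{g-1}$, which is the claimed bound. Since the argument is entirely amortized, it holds at every round, completing the proof.
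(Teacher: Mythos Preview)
The paper does not prove this lemma; it is quoted as Theorem~5 of Dietz and Sleator and used as a black box, so there is no paper proof to compare against.

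Your argument has a concrete error. With $c_k = 1/(g-k+1)$ the coefficients are \emph{increasing}: $c_1 = 1/g$ is the smallest and $c_g = 1$ the largest. The adversary in step~(i) is not obliged to load the top rank; putting all mass on the rank-$g$ variable gives, when that variable stays at rank~$g$, a gain of $c_g\cdot 1 = 1$, not $c_1 = 1/g$. (Re-sorting after step~(i) can only decrease $\Phi$, since by the rearrangement inequality the decreasing-$y$/increasing-$c$ pairing is the minimizing one; so the worst case really is $\Delta\Phi_{(\mathrm i)} = 1$, achieved whenever $y_{g-1}-y_g \ge 1$.) With the corrected bound $\Delta\Phi_{(\mathrm i)}\le 1$ and your own step-(ii) calculation, which here gives $\Delta\Phi_{(\mathrm{ii})} \le -c_1 y_1 = -y_1/g$ (the reindexing contributions $(c_{k-1}-c_k)y_k$ are now nonpositive and only help), you recover exactly the equilibrium $y_1 \le g$ that your first potential already gave. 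The second potential is not an improvement, and the asserted ``direct calculation'' cannot go through.

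There is also a structural gap in the wrap-up: even if you had established $\Delta\Phi \le 0$ whenever $y_1 > 1+h_{g-1}$, that alone would not yield the pointwise bound $y_1 \le 1+h_{g-1}$ at every round; it only says $\Phi$ cannot grow while $y_1$ is above the threshold, which does not preclude a one-round spike. The Dietz--Sleator argument avoids this by maintaining a family of rank-indexed invariants (bounds on each $y_k$, equivalently on each prefix sum of the sorted sequence) and verifying that one full round preserves all of them simultaneously; the reset in step~(ii) is precisely what shifts the rank-$(k{+}1)$ invariant into the rank-$k$ slot.
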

Let $m_i$ be the number of deleted elements in the $i$-th top collection $\cT_i$ and $\delta=n_f/(2\tau \log \tau)$. We define $x_i=m_i/\delta$.
We consider the working of our algorithm during the period when the value of~$n_f$ is fixed. Hence, $\delta$ is also fixed
 and the number of variables $x_i$ is $O(\tau)$ (some $x_i$ can correspond to empty collections). 
 Every iteration of the background process sets the largest $x_i$ to $0$. During each iteration $\sum x_i$ increases by 1. Hence, the values of~$x_i$ can be bounded from above by the result of Lemma~\ref{lemma:dietz}: $x_i\le 1+h_{2\tau}$  for all $i$ at all times.
Hence $m_i=O((n_f/2\tau\log\tau)\log\tau)=O(n_f/\tau)$ for all $i$ because $h_i=O(\log i)$. 
Thus the fraction of deleted symbols in each 
$\cT_i$ is $O(1/\tau)$. 

It is easy to show that the sub-collections that we use are sufficient for our algorithm. When a sub-collection $\cL_j$ 
is initialized, $\cC_j$ is empty. The situation when $\cC_j$ cannot accommodate a new document $T_n$ and a preceding  subcollection $\cC_{j-1}$ can happen  only after $\max_j-\sum_{t=1}^{j-1}\max_{t}$ new symbol insertions. Since we spend $O(\log^{\eps}n\cdot u(n))$ time for constructing $\cN_{j+1}$ with each new symbol insertion, we can choose constants in such a way
that construction of~$\cN_{j+1}$ is finished (and $\cL_j$ is discarded) after $\max_j/2<\max_j-\sum_{t=1}^{j-1}\max_{t}$ symbol insertions. The situation when $\cC_j$ contains 
$\max_j/2$ deleted symbols can happen after at least $\max_j$ new symbol updates ($\max_j/2$ insertions and $\max_j/2$ deletions). Hence, the collection $\cL_j$ is discarded before $\cC_j$ has to be locked again. In our description of update procedures we assumed that the parameter $n_f$ is fixed. We can maintain the invariant $n_f=\Theta(n)$ using standard methods; for completeness \shlongver{we will provide a description in the full version.}{we provide a description in Section~\ref{sec:updatesapp}. }

The space overhead caused by storing copies of deleted elements is bounded by 
$O(n/\tau)$: all $\cC_i$ contain $O(n/\tau)$ symbols and at most every second symbol in each $\cC_i$ is from a deleted document; the fraction of deleted symbols in each $\cT_i$ does not exceed $O(1/\tau)$. By the same argument,
at any moment of time at most $O(n/\tau)$ symbols are in sub-collections that are re-built. Hence re-building procedures running in the background need $O(nw(n)/\tau)$ bits of space. Since each $\cT_i$ contains at most $O(|\cT_i|/\tau)$ deleted symbols,
we can store the data structure $V$, which enables us to identify undeleted elements in any range of the suffix array and is implemented as described in Lemma~\ref{lemma:fastbit2}, using $O(|\cT_i|\log\tau/\tau)$ bits. Data structures $V$ for all $\cT_i$ need $O(n\log\tau/\tau)$ bits.
Hence, the total space overhead of~$\cI_d$ compared to $\cI_s$ is 
$O(n\frac{w(n)+\log\tau+\log\sigma}{\tau})$ bits.

\paragraph{Counting Occurrences}
Our dynamic indexes can be easily extended so that pattern counting queries are supported. 
\begin{theorem}
We can augment the indexes $\cI_d$ of Transfomations~\ref{trans:trans1}-~\ref{trans:trans1worst} with $O((n\log\tau)/\tau)$ additional bits so that all occurrences of a pattern can be counted in $O(\tcount)$ time, where $\tcount=(\trange+\log n/\log\log n)(r+\tau)$ and $\tau$ is defined as in the proofs of respective Transformations. If counting is supported, update times are increased by $O(\log n/\log\log n)$ additive term per symbol.
\end{theorem}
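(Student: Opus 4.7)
The plan is to augment each static sub-collection with a dynamic data structure that enables counting the deleted suffixes falling inside a query range of the corresponding suffix array. Recall that in each sub-collection $\cC_i$, $\cL_i$, $\cT_i$, $Temp_i$, the number of 1-bits in the bit vector $B$ which are later turned to $0$ is at most a $1/\tau$ fraction of the sub-collection's length: this is guaranteed by the rebuild threshold inside Transformations~\ref{trans:trans1} and~\ref{trans:trans1worst}, together with the Dietz--Sleator scheduling used for top sub-collections. Hence each $B$ is a \emph{sparse} dynamic bit vector with at most $|S_i|/\tau$ zeros, where $S_i$ is the sub-collection's total length. For such a vector, we can maintain the set of zero positions in a dynamic predecessor / rank structure that uses $O((|S_i|/\tau)\log\tau)$ bits and supports both point updates and rank queries in $O(\log n/\log\log n)$ time; summing over all sub-collections yields the claimed $O((n\log\tau)/\tau)$ additional bits.

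Counting then proceeds sub-collection by sub-collection. For a single sub-collection, we first invoke the underlying static index to obtain the suffix-array range $[a,b]$ of the query pattern in time $O(\trange)$. The number of \emph{live} occurrences of $P$ inside that sub-collection equals $(b-a+1)-(\id{rank}_0(b)-\id{rank}_0(a-1))$, where $\id{rank}_0$ is evaluated on the augmented dynamic bit vector $B$ for this sub-collection in $O(\log n/\log\log n)$ time. We then add the counts obtained from all $O(r+\tau)$ non-empty sub-collections ($\cC_i$, $\cL_i$, $Temp_i$, $\cT_i$, and, in Transformation~\ref{trans:trans1worst}, also $\cL'_r$), which yields a total query time of $O((\trange+\log n/\log\log n)(r+\tau))$ as claimed. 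No other component of the query-answering procedure changes, and the range-finding and locating times inherited from $\cI_s$ are not affected.

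For updates, the only modification is that every lazy deletion must now additionally insert the corresponding zero position into the dynamic predecessor structure for the affected sub-collection, and that periodic rebuilds must (re)construct the structure from scratch. The first adds $O(\log n/\log\log n)$ time per deleted symbol; the second is absorbed by the already-budgeted $O(u(n))$ construction cost per symbol, because building a sparse rank structure on a bit vector takes linear time in the number of zeros. Thus the update time is increased by only the additive $O(\log n/\log\log n)$ term per symbol, completing the statement.

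The main technical point that needs care is the sparsity-based rank structure: one must verify that the invariant ``fraction of zeros in every active sub-collection is $O(1/\tau)$'' is preserved by both transformations at all times, so that summing the per-sub-collection bounds $(|S_i|/\tau)\log\tau$ over all sub-collections yields the global $O((n\log\tau)/\tau)$ bound without any dependence on how frequently deletions occur between rebuilds. The worst case is $\cC_j$ in Transformation~\ref{trans:trans1worst} during the interval when it has accumulated close to $\max_j/2$ deletions, but by construction it is then locked as $\cL_j$ and scheduled for removal, so the invariant continues to hold.
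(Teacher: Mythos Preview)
Your proposal is correct and follows essentially the same approach as the paper: augment each sub-collection's bit vector $B$ with a dynamic rank structure (the paper cites Navarro--Sadakane~\cite{NS10} explicitly for this), then count by summing, over all $O(r+\tau)$ sub-collections, the number of $1$-bits in the query range returned by range-finding. One minor imprecision: your claim that the $O(1/\tau)$ zero-fraction invariant ``continues to hold'' for $\cC_j$ (or $\cL_j$) in Transformation~\ref{trans:trans1worst} is not literally true---those sub-collections may carry up to a $1/2$ fraction of zeros---but the global $O((n\log\tau)/\tau)$ space bound still follows because the \emph{total} size of all $\cC_j$ and $\cL_j$ is only $O(n/\tau)$, so even an $O(|\cC_j|)$-bit rank structure on each of them stays within budget.
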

\begin{proof}
Every semi-dynamic index for a sub-collection $\cC_i$ (respectively $\cT_i$)
already keeps a vector $B$ that enables us to identify the suffixes of already deleted documents in the suffix array. We also store each $B$ in a data structure of 
Navarro and Sadakane~\cite{NS10} that supports rank queries in $O(\log n/\log\log n)$ time and updates in $O(\log n/\log\log n)$ time. If $B$ contains $O(|B|/\tau)$ zero values, then the structure of~\cite{NS10}  needs $O((|B|/\tau)\log\tau)$ bits. Using this data 
structure, we can count the number of~$1$'s in any portion $B[a..b]$ of~$B$ in the same time.  To answer a counting query, we first answer a range-finding query in every sub-collection. For every non-empty range that we found, we count the number of~$1$'s in that range. Finally, we sum the answers for all sub-collections.
Since a range-finding query returns the range of all suffixes  that start with a query pattern and each $1$ in $V$ corresponds to  a suffix of  an undeleted document, our procedure is correct.
\end{proof}

\section{Dynamic Indexes}
\label{sec:dynind}
To obtain our results on dynamic document collections, we only need to plug some currently known static indexes into Transformations~\ref{trans:trans1} and~\ref{trans:trans1worst}.
\shlongver{We will prove the statements about constructibility of static indexes in the full version of this paper.}{For completeness, we prove the statements about constructibility of static indexes in Section \ref{sec:constimeapp}.}
 
The static index of Belazzougui and Navarro~\cite{BelazzouguiN11}  is $(\log^{\eps}n,\log\sigma)$-constructible.  Their index  achieves $\trange=O(|P|)$, $\textract=O(s+\ell)$, and $\tsa=\tlocate=O(s)$ for arbitrarily large alphabets; it needs $nH_k + O(n\frac{\log n}{s}) + O(n\frac{H_k}{\log\log n})+O(n)$ bits. We apply Transformation~\ref{trans:trans1worst} with $\tau=\log\log n$. The construction algorithm for this index relies on randomized algorithm for constructing an mmphf functions~\cite{BelazzouguiN11}; therefore the update procedures of our dynamic data structure also rely on randomization in this case. The resulting dynamic index uses $nH_k+O(n\frac{\log n}{s})+ O(n\frac{\log\sigma}{\log\log n})+O(n)$ bits. This index achieves 
$\trange=O(|P|\log\log n)$, $\textract=O(s+\ell)$,  $\tlocate=O(s)$. Insertions and deletions are supported in $O(|T|\log^{\eps}n)$ time and $O(|T|(\log^{\eps}n+s))$ expected time respectively. If counting queries are also supported, then $\tcount= O(|P|\log \log n +\log n)$ and updates take $O(|T|\log n)$ expected time. 

The index of Barbay et al.~\cite{BGNN10,BarbayCGNN14} is also $(\log^{\eps}n, \log\sigma)$-constructible and uses  $nH_k+O(n\frac{\log n}{s})+o(n\log\sigma)$ bits.  If the alphabet size $\sigma=\log^{O(1)}n$, this index  achieves $\trange=O(|P|)$, $\textract=O(s+\ell)$, and $\tlocate=O(s)$; it uses $nH_k+O(n\frac{\log n}{s})+o(n\log\sigma)$ bits. If we set $\tau=\log\log n$ and apply Transformation~\ref{trans:trans1worst}, we obtain a dynamic data structure with $\trange=O(|P|\log\log n)$, $\textract=O(s+\ell)$, and $\tsa=\tlocate=O(s)$. For an arbitrary alphabet size $\sigma$, the index of Barbay et al.~\cite{BGNN10,BarbayCGNN14} achieves $\trange=O(|P|\log\log \sigma)$, $\textract=O((s+\ell)\log\log \sigma)$, and $\tsa=\tlocate=O(s\log\log \sigma)$. Again we set $\tau=\log\log n$ and apply Transformation~\ref{trans:trans1worst}. We obtain a dynamic index that has query costs  $\trange=O(|P|\log\log \sigma\log\log n)$, $\textract=O((s+\ell)\log\log \sigma)$, and $\tsa=\tlocate=O(s\log\log \sigma)$. 
Insertions and deletions are supported in $O(|T|\log^{\eps}n)$ time and $O(|T|(\log^{\eps}n+s))$ time respectively.
If counting queries are also supported, then $\tcount= O(|P|\log \log n \log\log\sigma +\log n)$
(resp.\ $\tcount= O(|P|\log \log n +\log n)$ if $\sigma=\log^{O(1)}n$) and updates take $O(|T|\log n)$ time. 

The index of Grossi and Vitter~\cite{GrossiV05} is $(\log^{\eps}n,\log\sigma)$-constructible.  It achieves  $\tlocate=O(\log^{\eps}n)$, $\trange=O(|P|/\log_{\sigma}n +\log^{\eps}n)$ and $\textract=O(\ell/\log_{\sigma}n)$. 
We apply Transformation~\ref{trans:trans1worst} with $\tau=1/\delta$ for a constant $\delta$. The resulting dynamic index uses $O(n\log\sigma(1+1/\delta))=O(n\log\sigma)$ bits and has the following query costs: $\tlocate=O(\log^{\eps}n)$, $\trange=O(|P|/\log_{\sigma}n +\log^{\eps}n)$, $\textract=O(\ell/\log_{\sigma}n)$.\no{, and $\tcount=O(|P|/\log_{\sigma}n+\log n/\log\log n)$.} 
\shlongver{As will be shown in the full version,}{ 
As described in Section~\ref{sec:uncompr},} in this case the data structure for uncompressed sequence $\cC_0$ relies on hashing. Therefore the update procedure is randomized. Updates are supported in $O(|T|\log^{2\eps}n)$ expected time, but we can replace $\eps$ with $\eps/2$ in our construction and reduce the update time to $O(|T|\log^{\eps}n)$. If counting queries are also supported, then $\tcount=O(|P|/\log_{\sigma}n+\log n/\log\log n)$ and updates take $O(|T|\log n)$ expected time. 
If we want to support updates using a deterministic procedure, then the cost of searching in $\cC_0$ grows to 
$O(|P|(\log\log n)^2/\log_{\sigma}n+ \log n)$. In this case $\trange=\tcount=O(|P|(\log\log n)^2/\log_{\sigma}n +\log n)$,
$\tlocate=O(\log^{\eps}n)$, and  $\textract=O(\ell/\log_{\sigma}n)$.

\section{Dynamic Graphs and Binary Relations}
\label{sec:dyngraph}
Let $R$  denote a binary relation between $t$ objects and $\sigma_l$ labels. In this section we denote by $n$ the cardinality of~$R$, i.e., the number of object-label pairs. We will assume that objects and labels are integers from intervals $[1,\sigma_l]$ and $[1,t]$ respectively. Barbay et al.~\cite{BGMR07} showed how a static relation $R$ can be represented by a string $S$. 
A dynamization of their  approach based on dynamic data structures for rank and select queries is described in~\cite{NavarroN13}. 

Let $M$ be a matrix that represents a binary relation $R$; columns of~$R$ correspond to objects and rows correspond to matrices. The string $S$ is obtained by traversing $M$ columnwise (i.e., objectwise) and writing the labels. An additional bit string $N$ encodes the numbers of labels related to objects: $N=1^{n_1}01^{n_2}0\ldots 1^{n_t}$, where $n_i$ is the number of labels related to the $i$-th object. Using rank, select, and access queries on $N$ and $S$, we can enumerate objects related to a label, enumerate labels related to an object, and decide whether an object and a label are related. 

\paragraph{Deletion-Only Data Structure}
We keep $R$ in $S$ and $N$ described above; $S$ and $N$ are stored in static data structures.  If a pair $(e,l)$  is deleted from $R$, we find the element of~$S$ that encodes this pair and mark it as deleted.  
We record marked elements (i.e. pairs that are deleted but are still stored in the data structure) in a bit vector $D$: $D[i]=0$ if and only if the pair $S[i]$ is marked as deleted.  We maintain the data structure of Lemma~\ref{lemma:fastbit2} on $D$. Moreover we keep $D$ in a data structure described in~\cite{GonzalezN09}; this data structure enables us to count 
the number of~$1$-bits in any range of~$D$.
For each label $a$ we also keep a data structure $D_a$. $D_a$ is obtained by traversing the $a$-th row of~$M$:  if  $M[a,j]\not=0$, then we append  $0$ to $D_a$ if $(a,j)$ is marked as deleted; if $M[a,j]\not=0$ and $(a,j)$ is not marked as deleted, we append $1$ to $D_a$.  For each $D_a$ we also maintain data structures for reporting and counting $1$-bits described above.  Finally we record indices of deleted labels and objects in two further bit sequences. The static data structures on $S$ and $N$ are implemented as in\cite{BarbayCGNN14}, so that rank and select queries are answered in $O(\log\log \sigma_l)$ time and any $S[i]$ or $N[i]$ can be retrieved in constant time. 

If we need to list labels related to an object $i$, we first find the part of~$S$ that contains these labels. Let $l=\ra_1(\sel_0(i-1,N),N)$ and $r=\ra_1(\sel_0(i,N),N)$. We list all elements of~$S[l..r]$ that are not marked as deleted by enumerating all $1$-bits in $D[l..r]$. Then we access and report $S[i_1]$, $S[i_2]$, $\ldots$, $S[i_f]$, where $i_1$, $i_2$, $\ldots$, $i_f$ are positions of~$1$-bits in $D[l..r]$. 
In order to list objects related to a label $a$, we find positions of~$1$-bits in $D_a$. Then we access and report $\sel_a(j_1,S)$, $\sel_a(j_2,S)$, $\ldots$, where 
$j_1$, $j_2$, $\ldots$ denote positions of~$1$-bits in $D_a$. In order to determine whether an object $i$ and a label $a$ are related, we compute $d=\ra_a(r,S)-\ra_a(l,S)$, where $l$ and $r$ are as defined above. If $d=0$, then the object $i$ and the label $a$ are not related. If $d=1$, we compute $j=\sel_a(\ra_a(r,S),S)$; $i$ and $a$ are related if and only if 
$D[j]=1$.

When $(e,l)$ is deleted, we  find the position $j$ of~$(e,l)$ in $S$ and set $D[j]=0$; $j$ can be found with a constant number of rank and select queries. We also set $D_a[j']=0$ for $j'=\ra_a(S,j)$. When  an empty label or an empty object is removed, we simply record this fact by adding it to a compact list of empty labels (resp.\ empty objects).  When the number of pairs that are marked as deleted exceeds $n/\tau$, we start the process of re-building the data structure. The cost of re-building is distributed among the following updates; we will give a more detailed description in the exposition of the fully-dynamic data structure.

\paragraph{Fully-Dynamic Data Structure} 
We split a binary relation $R$, regarded as a set of object-label pairs, into subsets and keep these subsets in data structures $\bC_0$, $\bC_1$, $\ldots$, $\bC_r$, $\bL_1$, $\ldots$, $\bL_r$, and $\bT_1$, $\ldots$, $\bT_{g}$ for $g=\Theta(\tau)$.  We set the parameter $\tau=\log\log n$. Only $\bC_0$ is stored in a fully-dynamic data structure, but we can afford to keep $\bC_0$ in $O(\log n)$ bits per item because it contains only a small fraction of pairs.
All other pairs are stored in deletion-only data structures described above. Distribution of pairs among subsets and procedures for re-building deletion-only data structures are the same as in Section~\ref{sec:worst}. To simplify a description, we will not distinguish between a subset and a data structure that stores it. 

$\bC_0$ contains at most $\max_0=2n/\log^2 n$ pairs. Each structure $\bC_i$ for $r\ge i\ge 1$ contains at most $\max_i=2n/\log^{2-i\eps}n$ pairs. Every $\bT_i$ contains 
at most $2n/\tau$ pairs. 
Data structure $\bC_0$ contains object-label pairs in uncompressed form and uses $O(\log n)$ bits per pair. For every object $i$ 
 that occurs in $\bC_0$ we keep a list $L_i$ that contains all labels that occur in pairs $(i,\cdot)\in \bC_0$; for each label $a$ that occurs in $\bC_0$ we keep a list of objects that occur in pairs $(\cdot,a)\in \bC_0$. Using these lists we can enumerate all objects related to a label or labels related to an object in $\bC_0$ in $O(1)$ time per datum. If we augment lists $L_i$ with predecessor data structures described in~\cite{AnderssonT07}, we can also find out whether an object $i$ and a label $a$ are related in $O((\log\log\sigma_l)^2)$ time. 

All pairs in $\bC_1$,$\bL_1$,$\ldots$, $\bC_r$, $\bL_r$, and $\bT_1$, $\bT_{\tau}$ are kept in deletion-only data structures described above. A new object-label pair  $(i,a)$ is inserted into $\bC_0$ if $\bC_0$ contains less than $\max_0$ pairs. Otherwise we look for the smallest $j$, $0\le j< r$, such that $|\bC_{j+1}|+|\bC_j|+1\le \max_{j+1}$. We lock $\bC_j$ by setting $\bL_j=\bC_j$, $\bC_j=\emptyset$ and  initiate the process of creating $\bN_{j+1}=\bC_j\cup\bC_{j+1}\cup\{(i,a)\}$. If $|\bC_{i+1}|+|\bC_i|+1\le \max_{i+1}$ for all $i<r$, we lock $\bC_r$ and start the process of constructing $\bN_{j+1}=\bC_r\cup \{(i,a)\}$. The cost of creating $\bN_j$ is distributed among the next $\max_{j}$  updates in the same way as in Section~\ref{sec:worst}. We observe that data structures $Temp_i$ are not needed now because each update inserts only one element (pair) into the relation $R$.  We guarantee that each structure $\bC_i$ for some $1\le i\le r$ contains at most $\max_i/2$ pairs marked as deleted and  $\bT_i$ for $1\le i\le r$ 
contains an $O(1/\tau)$ fraction of deleted pairs. Procedures for re-building data structures that contain too many pairs marked as deleted are the same as in Section~\ref{sec:worst}.

Our fully-dynamic data structure must support insertions and deletions of new objects and labels. An object that is not related to any label or a label that is not related to any object can be removed from a data structure. This means that both the number of labels $\sigma_l$ and the number of objects $t$ can change dynamically. Removing and inserting labels implies changing the alphabets 
of strings $S$ that are used in deletion-only data structures. 
Following~\cite{NavarroN13} we store two global tables, $NS$ and $SN$; $SN$  maps labels to integers bounded by $O(\sigma_l)$ (global label alphabet) and $NS$ maps integers back to labels. We also keep bitmaps $GC_i$ and $GT_i$, $GL_i$, and $GN_i$ for all subsets $C_i$, $L_i$, $N_i$, and $T_i$. $GC_i[j]=1$ if the label that is assigned to integer $j$ occurs in $\bC_i$ and $GC_i[j]=0$ otherwise; $GT_i$, $GL_i$, and $GN_i$ keep the same information for subsets $\bT_i$, $\bL_i$, and $\bN_i$. Using these bit sequences we can map the symbol of a label in the global alphabet to the symbol of the same label in the effective alphabet\footnote{An effective alphabet of a sequence $S$ contains only symbols that occur in $S$ at least once.} used in one of subsets. When a label $a$ is deleted, we mark $SN[a]$ as free. When a new label $a'$ is inserted, we set $SN[a']$ to a free slot in $SN$ (a list of free slots is maintained). When some subset, say $\bC_i$ is re-built, we also re-build the bit sequence $GC_i$. 

In order to list objects related to a label $a$, we first report all objects that are related to $SN[a]$ and stored in $\bC_0$. Then 
we visit all subsets $\bC_i$, $\bL_i$, and $\bT_i$ and report all objects related to $\ra_1(SN[a],GC_i)$, $\ra_1(SN[a],GL_i)$, and $\ra_1(SN[a],GT_i)$ respectively. We remark that a global symbol of a label can be mapped to a wrong symbol in the local effective alphabet. This can happen if some label $a'$ is removed and its slot in $SN[]$ is assigned to another label $a$ but the bitmap of 
say $GC_i$ is not yet re-built. In this case $\ra_1(SN[a],GC_i)$ will map $a$ to the symbol for the wrong label $a'$. But $a'$ can be removed only if all object-label pairs containing $a'$ are deleted; hence, all pairs $(i,a')$ in $\bC_i$ are marked as deleted and the query to $\bC_i$ will correctly report nothing.  We can report labels related to an object and tell whether a certain object is related to a certain label using a similar procedure. 
We visit $O(\log \log n)$ data structures in order to answer a query. In all data structures except for $\bC_0$, we spend $O(\log\log \sigma_l)$ time per reported datum. An existential query on $\bC_0$ takes $O((\log\log \sigma_l)^2)$ time; all other queries 
on $\bC_0$ take $O(1)$ time per reported datum.  Hence all queries are answered in $O(\log\log n\log\log \sigma_l)$ time per reported datum. A counting query takes $O(\log n/\log \log n)$ time in each subset. Hence, we can count objects related to a label or labels related to an object in $O(\log n)$ time. 

All bit sequences $D$ and $D_a$ in all subsets use $O((n/\tau)\log\tau)$ bits. Every string $S$ stored in a deletion-only data structure needs $|S|H_0(S)+o(|S|\log\sigma_l)$ bits.  Hence all strings $S$ use at most $nH+o(n\log\sigma_l)$ bits, where $H=\sum_{1\le a\le \sigma_l}\frac{n_i}{n}\log\frac{n}{n_i}$. Bit sequences $GC_i$, $GL_i$, and $GT_i$ use $O(\sigma_l\tau)=o(n\log\sigma_l)$ bits.
Now we consider the space usage of bit sequences $N$ stored in deletion-only data structures. Let $m_i$ denote the number of pairs in a data structure $\bT_i$. $N$ consists of~$m_i$ $1$'s and $t$ $0$'s. If $m_i>t$, then the bit sequence $N$ stored as a part of~$\bT_i$ uses $m_i\log\frac{m_i+t}{m_i}=O(m_i)$ bits. If $t\ge m_i$, $N$ uses $O(m_i\log\tau)$ bits because $m_i=\Theta(n/\tau)$.  Hence all $N$ stored in all $\bT_i$ use $O(n\log\tau)$ bits. In our data structure we set $\tau=\log\log n$. If $\sigma_l=\Omega(\log^{1/4} n)$, $O(n\log\tau)=o(n\log\sigma_l)$. Otherwise $t=\Omega(n/\log n)$ because $n\le t\cdot\sigma_l$; if $t=\Omega(n/\log n)$, $O(n\log\tau)= o(t\log t)$. Data structures that are re-built at any moment of time contain $O(n/\tau)$ elements and use $O(\frac{n}{\tau}\log\sigma_l)=o(n\log\sigma_l)$ bits. Extra space that we need to store elements marked as deleted is bounded by $o(n\log\sigma_l)$; this can be shown in the same way as in Section~\ref{sec:worst}.
\begin{theorem}
\label{theor:binrel}
A dynamic binary relation that consists of~$n$ pairs relating $t$ objects to $\sigma_l$ labels can be stored in $nH + o(n\log\sigma_l) + o(t\log t) +O(t+n+\sigma_l\log n)$ bits where 
$H=\sum_{1\le a\le \sigma_l}\frac{n_a}{n}\log\frac{n}{n_a}$ and $n_a$ is the number of objects related to a label $a$. We can determine whether an object and a label are related in $O(\log\log \sigma_l\log \log n)$ time and report all objects related to a label (resp.\ all labels related to an object) in $O((k+1)\log\log\sigma_l\log \log n)$ time, where $k$ is the number of reported items. We can count objects related to a label or labels related to an object in $O(\log n)$ time. 
Updates are supported in $O(\log^{\eps} n)$ time. 
\end{theorem}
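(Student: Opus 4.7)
The plan is to verify each of the four claims of the theorem (space, adjacency cost, reporting cost, counting cost, update cost) from the data structure just described, treating the theorem as a bookkeeping consolidation of what has already been built. First I would restate the decomposition: one fully dynamic $\bC_0$ of size at most $2n/\log^2 n$, a chain $\bC_1,\bL_1,\ldots,\bC_r,\bL_r$ with $\max_i=2n/\log^{2-i\eps}n$, and top collections $\bT_1,\ldots,\bT_g$ of size $\Theta(n/\tau)$ each, with $\tau=\log\log n$ and $g=\Theta(\tau)$. Since $r=O(1/\eps)$ and $g=O(\log\log n)$, any query touches $O(\log\log n)$ substructures.

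Next I would handle query times. A query visits $\bC_0$, every $\bC_i$, $\bL_i$ and every $\bT_i$. In $\bC_0$, reporting costs $O(1)$ per datum via the maintained object-lists and label-lists, and the adjacency query costs $O((\log\log\sigma_l)^2)$ by the predecessor structure of Andersson and Thorup. In each deletion-only substructure, reporting reduces to listing $1$-bits in $D[l..r]$ or $D_a$ via Lemma~\ref{lemma:fastbit2} (constant per reported bit) together with rank/select/access on $S$ and $N$, which by Barbay et al.~\cite{BarbayCGNN14} take $O(\log\log\sigma_l)$ per operation. Multiplying by the $O(\log\log n)$ substructures gives the claimed $O((k+1)\log\log\sigma_l\log\log n)$ reporting and $O(\log\log\sigma_l\log\log n)$ adjacency bounds; the extra cost in $\bC_0$ is absorbed since $(\log\log\sigma_l)^2=O(\log\log\sigma_l\log\log n)$. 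For counting, the $D$ bit-vectors are additionally stored in the Gonz\'alez--Navarro representation supporting $O(\log n/\log\log n)$ rank queries; combined with an $O(\log\log\sigma_l)$ range-finding step per substructure over $O(\log\log n)$ substructures gives $O(\log n)$.

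The main obstacle is the space analysis. I would bound five contributions separately. (i) $\bC_0$ costs $O(n/\log n)$ bits, negligible. (ii) The concatenation of the strings $S$ over all deletion-only substructures uses $nH+o(n\log\sigma_l)$ bits; here I would invoke the concavity of the zero-order entropy, so that splitting a string into pieces only decreases the aggregate $nH_0$. (iii) The bit sequences $N$ need a case split: a $\bT_i$ of size $m_i=\Theta(n/\tau)$ contributes $O(m_i\log((m_i+t)/m_i))$ bits, which is $O(m_i)$ when $m_i\ge t$ and $O(m_i\log\tau)$ when $m_i<t$; summed this is $O(n\log\tau)$. For $\tau=\log\log n$, this is $o(n\log\sigma_l)$ once $\sigma_l=\Omega(\log^{1/4}n)$, and otherwise $t=\Omega(n/\log n)$ forces $n\log\tau=o(t\log t)$, giving the stated $o(n\log\sigma_l)+o(t\log t)$ term. (iv) The auxiliary deletion bitmaps $D$, $D_a$ use $O((n/\tau)\log\tau)=o(n)$ bits because only an $O(1/\tau)$ fraction of pairs is marked deleted. (v) The global/local alphabet infrastructure $SN$, $NS$, and the bitmaps $GC_i$, $GL_i$, $GT_i$, $GN_i$ together cost $O(\sigma_l\log n+\sigma_l\tau)$, absorbed into the $O(\sigma_l\log n)$ additive term. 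Finally, substructures being rebuilt in the background carry $O(n/\tau)$ pairs and $O(nw(n)/\tau)=o(n\log\sigma_l)$ workspace, as already analyzed in Section~\ref{sec:worst}.

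For updates, insertions either go into $\bC_0$ in $O(1)$ time or are routed by the same logic as in Transformation~\ref{trans:trans1worst}: the smallest $\bC_{j+1}$ that can absorb $\bC_j\cup\{(i,a)\}$ is identified and rebuilt in the background in $O(\log^\eps n)$ amortized (in fact worst-case, by the same scheduling argument) work per unit rebuilt. Deletions flip $D[j]$ and $D_a[j']$ in $O(\log^\eps n)$ time by Lemma~\ref{lemma:fastbit2}, and may trigger the same rebuilds; the Dietz--Sleator lemma~\cite{DietzS87} already invoked for the $\bT_i$'s guarantees that at most one top substructure is being rebuilt at any time. Maintaining $SN$, $NS$, and the $G\cdot_i$ bitmaps adds at most $O(\log n)$ per inserted or deleted label, amortized $O(1/\tau)$ times per symbol update, which does not affect the $O(\log^\eps n)$ bound. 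This completes the verification of all four bounds.
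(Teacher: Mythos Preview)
Your proposal is correct and follows essentially the same approach as the paper: the theorem is a bookkeeping consolidation of the construction in Section~\ref{sec:dyngraph}, and you recover each bound (queries via the $O(\log\log n)$ substructures times $O(\log\log\sigma_l)$ per rank/select, counting via \cite{GonzalezN09} on $D$, space via the five-way breakdown with the same case split on $N$, and updates via the Transformation~\ref{trans:trans1worst} machinery) in the same way the paper does. Your explicit invocation of the log-sum/concavity argument for item (ii) and your routing of the $O(\sigma_l\tau)$ cost into the $O(\sigma_l\log n)$ term are slightly more explicit than the paper's text, but the content is identical.
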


Directed graph is a frequently studied instance of a binary relation. In this case both the set of labels and the set of objects are identical with the set of graph nodes. There is an edge from a node $u$ to a node $v$ if the object $u$ is related to the label $v$. 
\begin{theorem}
\label{theor:graph}
A dynamic directed graph that consists of~$\sigma_l$ nodes and $n\ge\sigma_l$ edges can be stored in $nH + o(n\log\sigma_l) +O(n+\sigma_l\log n)$ bits where 
$H=\sum_{1\le a\le \sigma_l}\frac{n_a}{n}\log\frac{n}{n_a}$ and $n_a$ is the number of outgoing edges from node $a$. We can determine if there is an edge from one node to another one in $O(\log\log \sigma_l\log \log n)$ time and report all neighbors (resp.\ reverse neighbors) of a node in $O((k+1)\log\log\sigma_l\log \log n)$ time, where $k$ is the number of reported nodes. We can count neighbors or reverse neighbors of a node in $O(\log n)$ time. Updates are supported in $O(\log^{\eps} n)$ time. 
\end{theorem}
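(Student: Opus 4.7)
The plan is to derive this theorem as a direct corollary of Theorem~\ref{theor:binrel} by encoding the directed graph as a binary relation. Specifically, I would identify the set of objects and the set of labels with the common node set of cardinality $\sigma_l$, and represent each directed edge $(u,v)$ as the pair in which $u$ plays the role of the object and $v$ plays the role of the label. The reporting query ``list neighbors of a node'' then becomes ``list labels related to an object'' in the binary relation, while ``list reverse neighbors'' becomes ``list objects related to a label''. An edge-existence query corresponds to the relational existential query, and counting neighbors/reverse neighbors corresponds to counting in the relation. Insertions and deletions of edges are just insertions and deletions of pairs, and the procedures of Section~\ref{sec:dyngraph} carry over verbatim.

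All that remains is to simplify the space bound under the identification $t=\sigma_l$. Substituting into Theorem~\ref{theor:binrel} yields
\[
nH + o(n\log\sigma_l) + o(\sigma_l \log \sigma_l) + O(\sigma_l + n + \sigma_l\log n).
\]
Using the hypothesis $n\ge \sigma_l$, the term $o(\sigma_l\log \sigma_l)$ is absorbed into $o(n\log\sigma_l)$, and the additive $O(\sigma_l)$ is absorbed into $O(n)$, leaving the claimed bound $nH+o(n\log\sigma_l)+O(n+\sigma_l\log n)$. The $n_a$ in the graph setting counts the number of outgoing edges from node $a$, which equals the number of labels related to object $a$ in the relational view, so the entropy term $H$ has exactly the form stated.

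The query and update time bounds do not require any new argument: they are inherited directly from Theorem~\ref{theor:binrel} since the relational queries to which the graph queries are reduced have the quoted cost, and the update procedure is the same. There is no real obstacle here — the content of this theorem is essentially a specialization of the binary relation result, and the only thing to check carefully is that the space simplification really does go through under $n\ge \sigma_l$, which is why that hypothesis appears in the statement.
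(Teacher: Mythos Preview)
Your proposal is correct and matches the paper's approach exactly: the paper states just before Theorem~\ref{theor:graph} that a directed graph is an instance of a binary relation with objects and labels both identified with the node set, and leaves the theorem as an immediate specialization of Theorem~\ref{theor:binrel}. Your write-up is in fact more detailed than the paper's, which does not spell out the space-bound simplification under $n\ge\sigma_l$ or the query correspondence; these are precisely the checks you carry out.
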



\section{Conclusions}
In this paper we described a general framework for transforming static compressed indexes into dynamic ones. We showed that, using our framework, we can achieve the same or almost the same space and time complexity for dynamic indexes as was previously obtained by static indexes. Our framework is applicable to a broad range of static indexes that includes a vast majority of currently known results in this area. Thus, using our techniques, we can easily modify almost any compressed static index, so that insertions and deletions of documents are supported. It will likely be possible to apply our framework to static indexes that will be obtained in the future. 
Our approach also significantly reduces the cost of basic queries in compact representations of dynamic graphs and binary relations. We expect that our ideas can be applied to the design  of other compressed data structures.

\paragraph{Acknowledgments} The authors wish to thank Djamal Belazzougui for clarifying the construction time of the static index in~\cite{BelazzouguiN11} and Gonzalo Navarro for explaining some technical details of dynamic indexes used in~\cite{MN08}.

\bibliographystyle{abbrv}
\bibliography{dynrank}


\newpage
\appendix
\renewcommand\thesection{A.\arabic{section}}
\section{Reporting $1$-Bits  in a Bit Vector}\label{sec:fastbit}
We show how to store a bit vector with a small number of zeros 
in small space, so that all $1$-values in an arbitrary range can be reported in optimal time. This result is used by our method that transforms a static index into an index that supports deletions. We start by describing an $O(n)$-bit data structure.  Then we show how space usage can be reduced to $O((n\log \tau)/\tau)$
\begin{lemma}
   \label{lemma:fastbit}
  There exists an $O(n)$-bit data structure that supports the following operations on a bit vector $B$ of size $n$: 
  (i) $zero(i)$ sets $B[i]=0$ (ii) $report(s,e)$ enumerates all $j$ such that $s\le j\le e$ and $B[j]=1$. Operation $zero(i)$ is supported in $O(\log^{\eps}n)$ time and 
    a query $report(s,e)$ is answered in $O(k)$ time, where $k$ is the number of output bit positions. 
\end{lemma}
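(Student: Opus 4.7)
The plan is to store $B$ directly in $n+O(1)$ bits as an array of $m=\lceil n/w\rceil$ machine words of width $w=\Theta(\log n)$, and to layer on top of it an auxiliary index that locates active words quickly (call a word \emph{active} when it currently contains at least one $1$-bit). Let $\mathcal{A}\subseteq\{1,\dots,m\}$ be the set of active word indices. I would maintain $\mathcal{A}$ in (i) a doubly-linked list threaded in increasing order of word index and (ii) a dynamic predecessor/successor dictionary (a $y$-fast trie or the deterministic Andersson--Thorup structure), supporting insertion, deletion, and successor in $O(\log^{\eps}n)$ time. Since $|\mathcal{A}|\le m=O(n/\log n)$, each auxiliary structure occupies $O(|\mathcal{A}|\log n)=O(n)$ bits, keeping the total space $O(n)$.

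For $zero(i)$, I would clear bit $i$ in its word via standard $O(1)$ bit manipulation and test (by a popcount) whether that word has just become all-zero; if so, remove its index from the linked list and from the successor dictionary in $O(\log^{\eps}n)$ time. For $report(s,e)$, I would first handle the word $w_s=\lceil s/w\rceil$ containing $s$: if $w_s\in\mathcal{A}$, enumerate its $1$-bits lying in $[s,e]$ by iterating a word-level ``find first set bit $\ge$ position'' primitive at $O(1)$ per reported bit. If $w_s\notin\mathcal{A}$ or once $w_s$ is exhausted, one successor query on $\mathcal{A}$ (cost $O(\log^{\eps}n)$) jumps to the next active word; thereafter each successive active word is reached in $O(1)$ by walking the threaded linked list, its relevant $1$-bits are again enumerated in $O(1)$ per bit, and the traversal stops as soon as the current word index exceeds $\lceil e/w\rceil$.

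The main obstacle I anticipate is matching the stated ``$O(k)$ time'' for $report$ exactly: the single initial successor query contributes an additive $O(\log^{\eps}n)$, so what this construction really gives is $O(k+\log^{\eps}n)$. To achieve worst-case $O(k+1)$, one would want a decremental-successor structure with $O(1)$-time find, which is exactly what path-compressed union--find on word indices delivers but only in the amortized sense (exploiting that $zero$ removes indices from $\mathcal{A}$ and never inserts them). In all downstream uses within the paper the additive $O(\log^{\eps}n)$ is dominated either by the per-reported-bit $\tlocate$ cost or by the $O(\log^{\eps}n)$ already charged for updates, so the looser reading $O(k+\log^{\eps}n)$ suffices; if amortized update bounds are acceptable, replacing the successor dictionary by decremental union--find yields clean $O(k)$ amortized reporting and $O(1)$ amortized updates while staying inside the $O(n)$-bit budget.
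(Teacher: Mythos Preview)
Your construction is sound and your self-diagnosis is accurate: with a predecessor dictionary plus a threaded list you get $O(k+\log^{\eps}n)$ for $report$, not $O(k)$. The paper avoids this additive term by a different choice of auxiliary structure. Instead of maintaining $\mathcal A$ in a successor dictionary, it keeps the set of non-empty word indices in the one-dimensional \emph{dynamic range reporting} structure of Mortensen, Pagh, and P\v{a}tra\c{s}cu~\cite{MPP05}, which supports insertions/deletions in $O(\log^{\eps}n)$ time and reports all $k'$ indices in a query interval in $O(k')$ time outright (no additive predecessor cost). With that black box, $report(s,e)$ is: one range-reporting query on the word-index interval to enumerate the relevant non-empty words, then constant-time enumeration of the $1$-bits inside each returned word via a universal lookup table; the two boundary words are handled separately. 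Since every reported non-empty word contributes at least one output bit, the total is $O(k)$ in the worst case, matching the lemma statement without amortization.

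Aside from this single substitution, the two arguments are structurally identical (word partition of width $\Theta(\log n)$, $zero$ by clearing a bit and deleting the word index if it becomes empty, in-word bit enumeration in $O(1)$ per bit). Your approach is more elementary and would indeed suffice for the paper's downstream applications, as you note; the paper's approach buys the clean $O(k)$ bound by invoking a stronger off-the-shelf result.
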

\begin{proof}
We divide the vector $B$ into words $W_1$, $\ldots$,$W_{\ceil{|B_i|/\log n}}$ of~$\log n$ bits. 
We say that a word $W_t$ is non-empty if at least one bit in $W_t$ is set to $1$. We store the indices of all non-empty words in a data structure 
that supports range reporting queries in $O(k)$ time, where $k$ is the number of reported elements, and updates in $O(\log^{\eps}n)$ time~\cite{MPP05}. 
For every word $W_i$ we can find the rightmost bit set to $1$ before the given position $p$ or determine that there is no bit set to $1$ to the right of~$p$ 
in $O(1)$ time. This can be done by consulting a universal look-up table of size $o(n)$ bits. To report positions of all $1$-bits in $B[s..e]$, we find all 
non-empty words whose indices are in the range $[\ceil{s/\log n},\floor{e/\log n}]$. For every  such word, we output the positions of all $1$-bits. Finally, we also examine the words $W_{\floor{s/\log n}}$ and $W_{\ceil{e/\log n}}$ and report positions of~$1$-bits in these two words that are in $B[s..e]$. 
The total query time is  $O(k)$.  
Operation $zero(i)$ is implemented by setting the bit $i-\floor{i/\log n}\log n$ 
in the word $W_{\ceil{i/\log n}}$ to $0$. If $W_{\ceil{i/\log n}}$ becomes empty, we remove 
$\ceil{i/\log n}$ from the range reporting data structure. 
\end{proof}

\begin{lemma}
   \label{lemma:fastbit2}
  Let $B$ be a bit vector of size $n$ with at most $O(\frac{n}{\tau})$ zero values for $\tau=O(\log n/\log \log n)$.  $B$ can be stored in $O(n\frac{\log\tau}{\tau})$-bit data structure that supports the following operations on  $B$: 
  (i) $zero(i)$ sets $B[i]=0$ (ii) $report(s,e)$ enumerates all $j$ such that $s\le j\le e$ and $B[j]=1$. Operation $zero(i)$ is supported in $O(\log^{\eps}n)$ time and  a query $report(s,e)$ is answered in $O(k)$ time, where $k$ is the number of output bit positions. 
\end{lemma}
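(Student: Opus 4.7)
The plan is to save a factor of roughly $\tau/\log\tau$ over Lemma~\ref{lemma:fastbit} by exploiting that almost all bits of $B$ are $1$. I partition $B$ into chunks of $\tau$ bits each, giving $N=n/\tau$ chunks; since there are at most $O(n/\tau)$ zeros in total, at most $O(n/\tau^2)$ chunks are entirely zero and at most $O(n/\tau)$ contain any zero. These counts drive the accounting. I maintain three components: (i) a condensed bit vector $E[1..N]$ with $E[i]=1$ iff chunk $i$ contains at least one $1$-bit, represented by applying Lemma~\ref{lemma:fastbit} to $E$ itself, costing $O(n/\tau)$ bits; (ii) a dynamic compact predecessor structure on the zero positions of $B$, using $O((n/\tau)\log\tau)$ bits and supporting insertion in $O(\log^{\eps}n)$ and $O(1)$ per element during sequential in-order scan; and (iii) explicit $1$-position lists for the at most $2n/\tau^{2}$ \emph{dense} chunks (those with more than $\tau/2$ zeros), held together with a rank-augmented indicator vector $D[1..N]$, using $O(n\log\tau/\tau)$ bits in total.

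To answer $report(s,e)$ I use Lemma~\ref{lemma:fastbit} on $E$ to enumerate, in output-sensitive time, the chunks in $[\lceil s/\tau\rceil,\lfloor e/\tau\rfloor]$ that contain any $1$-bit. For each such chunk lying fully inside $[s,e]$: if $D$ marks it dense, I emit its stored $1$-positions that lie in $[s,e]$ in $O(k_{i})$ time; otherwise the chunk is sparse and therefore contains at least $\tau/2$ ones, and I iterate its $\tau$ positions while skipping zeros obtained from (ii), spending $O(\tau)=O(k_{i})$. The two boundary chunks at $s$ and $e$ are reconstructed as a single $\tau$-bit word from (ii) and enumerated via word-level bit tricks in $O(1+k_{i})$ time, valid because $\tau\le\log n$. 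Summing, the total cost is $O(k)$. The operation $zero(i)$ inserts $i$ into (ii), updates $E$ through Lemma~\ref{lemma:fastbit} when the chunk containing $i$ has just become all-zero, and updates (iii) when the chunk just crossed the density threshold or was already dense; the $O(\tau)$ cost of promoting a chunk to dense is amortized over the $\tau/2$ updates the chunk must absorb before such a promotion, giving $O(\log^{\eps}n)$ amortized time per operation.

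The main obstacle will be engineering component (iii) so that the directory mapping chunk indices to lists, together with the lists themselves, fits in $O((n/\tau)\log\tau)$ bits under dynamic updates: we cannot afford $\Omega(\log n)$ bits of pointer per chunk. The budget works because the total $1$-content stored across all dense chunks is at most $(2n/\tau^{2})(\tau/2)=n/\tau$, allowing just $\log\tau$ bits per stored position, and because the directory can be realized by a rank-augmented dynamic $D$ whose $O(n/\tau)$ bits also fit. With (i) performing the output-sensitive enumeration of candidate chunks and the dense/sparse dichotomy ensuring $O(k_{i})$ work per chunk, the remainder of the proof is bookkeeping.
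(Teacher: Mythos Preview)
Your high-level skeleton matches the paper's: partition $B$ into $\tau$-bit words, apply Lemma~\ref{lemma:fastbit} to the length-$n/\tau$ indicator of nonempty words, and encode the zeros inside each word compactly. But you have layered on extra machinery---the global predecessor structure (ii), the dense/sparse split, and the explicit $1$-lists (iii)---that the paper does not use, and the gap in your argument lies precisely in that extra machinery.

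The concrete failure is the boundary step. You claim the two boundary chunks are ``reconstructed as a single $\tau$-bit word from (ii)'' and then enumerated in $O(1+k_i)$ time. But (ii) is a predecessor structure over the global set of zero positions; to rebuild a chunk from it you must enumerate every zero inside that chunk, costing $\Omega(z_i)$ even if your sequential scan is $O(1)$ per element (and you also need one successor query to locate the first such zero, whose cost you never bound). A boundary chunk may have $z_i=\Theta(\tau)$ zeros while contributing $k_i=0$ ones to $[s,e]$---for instance, a chunk with a single $1$ outside the queried subrange---so your query time picks up an additive $\Theta(\tau)=\Theta(\log n/\log\log n)$ and is no longer $O(k)$. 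Falling back on (iii) does not help, since the bad case is a \emph{sparse} boundary chunk. The same locate-the-first-zero issue also afflicts your handling of sparse interior chunks; there it happens not to hurt the asymptotics because $k_i\ge\tau/2$, but it shows that (ii) is doing work you have not accounted for.

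The paper sidesteps all of this by encoding each $\tau$-bit word $W_i$ directly as its zero count $f_i$ followed by the $f_i$ zero positions, taking $(f_i+1)\log\tau$ bits; summing gives $O((n/\tau)\log\tau)$. Because any such encoding is at most $(\tau+1)\log\tau=O(\log n)$ bits, a single universal lookup table of $o(n)$ bits returns, in $O(1)$ time, the next $1$-bit of $W_i$ after any position $p$. This one device handles interior and boundary words uniformly and makes your components (ii), (iii), the dense/sparse dichotomy, and the threshold-crossing amortization all unnecessary. If you replace your global predecessor structure by this per-word encoding, your proof collapses into the paper's.
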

\begin{proof}
We divide $B$ into words $W_i$ of~$\tau$ bits. Indices 
of non-empty words are stored in the the data structure $B'$,  implemented as in Lemma~\ref{lemma:fastbit}. Every word $W_i$ is represented as follows: we store the number of zeros 
in $W_{i}$ using $O(\log\tau)$ bits. A subword with $f$ zeros,
where $0\le f\le \tau$, is encoded using $f(\log\tau)$
bits by specifying positions of~$0$-bits.  For every word $W_i$ we can find the rightmost bit set to $1$ before the given position $p$ or determine that there is no bit set to $1$ to the right of~$p$ in $O(1)$ time. This can be done by consulting a universal look-up table of size $o(n)$ bits. 
Query processing is very similar to Lemma~\ref{lemma:fastbit}.
To report $1$-bits in $B[s..e]$, we find all 
non-empty words whose indices are in the range $[\ceil{l/\tau},\floor{r/\tau}]$. For every  such word, we output the positions of all $1$-bits. Finally, we also examine the words $W_{\floor{s/\tau}}$ and $W_{\ceil{e/\tau}}$ and report positions of~$1$-bits in these two words that are in $B[s..e]$. 

Operation $zero(i)$ is implemented by setting the corresponding bit in some word $W_i$ to $0$ and changing the word encoding.
If $W_i$ becomes empty, the $i$-th bit in $B'$ is set to $0$.
Issues related to memory management can be resolved as in \cite{NavarroN13}.

We need $O(n/\tau)$ bits to store the data structure $B'$  for non-empty words. Let $n_f$ denote the number of words with $f$ zero values.  All words 
$W_i$ need  $\sum_{f=1}^{\tau} n_f\cdot f\cdot \log\tau=\log\tau \sum n_f\cdot f=O((n/\tau)\log \tau)$ because $\sum n_f\cdot f =O(n/\tau)$. 
\end{proof}

\longver{
\section{Dynamic Document Collection in $O(n\log n)$ bits}
    \label{sec:uncompr}

A generalized suffix tree is a compact trie that contains all suffixes of all documents. Trie edges are labeled with strings and leaves correspond to suffixes of documents. Each document ends with a unique special symbol $\$_i$, hence all suffixes are unique. Every internal node has at least two children. Let $path(v)$ denote the string obtained by concatenating labels on the path from the root to a node $v$. The \emph{locus} of a string $P$ is the highest node $v$ such that $P$ is a prefix of $path(v)$. For every leaf $u$, $path(u)$ corresponds to a suffix.  Each occurrence of $P$ corresponds to a unique leaf that descends from the locus node of $P$. See e.g.,~\cite{Gus97} for a more detailed description of suffix tree.

We keep the collection $\cC_0$ in a generalized suffix tree (GST) augmented with suffix links.  
 A suffix link for a node $u$ labelled with a string $aX$ points to a node $v$ labelled with a string $X$. We use the algorithm of McCreight for inserting a new string into a suffix tree. When a new text $T$ is inserted we find the position of the string $T$ in the GST. Then we insert a leaf $u_l$ labelled with the suffix $T[1..|T|]$; if necessary, we also insert a parent node of~$u_l$ into the GST.  Then we follow the suffix link in the lowest ``old'' ancestor of~$u_l$ (i.e., the lowest node on the path to $u_l$ that existed before the insertion of~$T$ started). If this link points to some node $v$, we descend from $v$ as far as possible. Then we insert a new leaf $v_l$ corresponding to $T[2..|T|]$ and possibly 
the parent of~$v_l$. This procedure continues until all suffixes of~$T$ are inserted. Deletions are symmetric. The number of traversed edges and inserted nodes is 
$O(|T|)$. Every insertion of a new node takes $O(1)$ time.
 
To navigate in the suffix tree, we need a data structure $D(u)$ in each internal node $u$. For every child $u_i$ of~$u$, $D(u)$ contains the first character 
$a_i$ of the edge label $l(u,u_i)$, where $l(v,w)$ denotes an edge between nodes $v$ and $w$. For every alphabet symbol $a$, $D(u)$ returns a pointer to the edge $l(u,u_i)$ whose label starts with $a$ or reports that such edge does not exist. 
We can implement $D(u)$ in such way that queries and updates take $O(1)$ time.
If the alphabet size $\sigma$ is poly-logarithmic in $n$, we can use the data structure of Fredman and Willard~\cite{FW94}. 
If the alphabet size is large, $\sigma=\log^{\omega(1)}n$, we use the dynamic hashing to keep all children of a node $u$. In the latter case, the update time is randomized. If the alphabet size is large and  updates are using a deterministic algorithm, then 
we  implement $D(u)$ as an exponential tree~\cite{AnderssonT07};
in this case an appropriate child $u_i$ of~$u$ is found in $O((\log\log\sigma)^2)$ time.

Occurrences of a pattern $P$ are reported using the standard suffix tree procedure.
We traverse the search path for a pattern $P$ starting at the root node and choosing the child $u_i$ of the current node $u$ that is labelled with a prefix of~$P$ until the locus of~$P$ is found or the search cannot continue. In each 
visited node $u$ we search for $p_i$ in $D(u)$, where $p_i$ is the next unprocessed symbol in $P$. If $u_i$ is labelled with a prefix of~$P$, the search continues in $u_i$. Otherwise the search ends on the edge from $u$ to $u_i$. 
When the locus of a pattern $P$ is found, we can report all occurrences of~$P$ in $O(1)$ time per occurrence. 

We can also modify our data structure so that the locus of~$P$ is found in $O(|P|/\log_{\sigma}n(\log\log \sigma)^2 +\log n)$ time~\cite{NavarroN13arx}.  If the update procedure uses randomization, then the locus of~$P$ can be  found in $O(|P|/\log_{\sigma}n+\log^{\eps}n)$ time.

\section{Maintaining the Sizes of Sub-Collections after Updates in Transformation~2}
\label{sec:updatesapp}
We show here how to maintain the invariant $n_f=\Theta(n)$.
If $n\ge 2n_f$ after a document insertion, we set $n_f=n$. Maximal sizes $\max_i$ of subcollections $\cC_i$ are changed accordingly. All top sub-collections $\cT_i$ that contain less than $n_f/\tau$ symbols are merged into new collections $\cT'_l$ of total size between $n_f/\tau$ and $2n_f/\tau$ symbols. During the next $n_f/\tau$ symbol updates (that is, insertions and deletions of texts of total size $n_f/\tau$), we construct new collections $\cT'_l$. 

$\cT_i$ that must be re-built are processed one-by-one.
Since at any moment only one $\cT'_i$ is constructed, this process 
needs $O(nw(n)/\tau)$ bits of workspace. 

If $n\le n_f/2$ after a document deletion, we set $n_f=n/2$. All $\cT_i$ that contain more than one document and satisfy $|\cT_i|\ge n_f/\tau$ are split into two subcollections $\cT'_i$. Each document $T$ from $\cT_i$, such that $|T|\ge n_f/\tau$ is assigned to its own one-document  collection $\cT'_j$.  Other documents are assigned to collections of size between $n_f/\tau$ and $n_f/2\tau$ symbols. We also move all documents from collections $\cC_j$, $j=0,\ldots, r$, to one or two new collections $\cT'_{i_1}$ and $\cT'_{i_2}$, such that $\cT'_{i_1}$, $\cT'_{i_2}$ contain between $n_f/\tau$ and $n_f/2\tau$ symbols.
 At any moment only one new collection $\cT'_i$ is constructed. Hence this process also needs  $O(nw(n)/\tau)$ bits of workspace. We can schedule the rebuilding in such way that all $\cT'_i$ are finished after the following $n_f$  symbol updates.  

We also start the re-building process every time when a one-document collection $\cT_i$ is inserted or deleted. In this case we update the value of~$n_f$ and re-build the subcollections as described above (if there is another process for replacing $\cT_i$ with $\cT_i'$ that currently runs in the background, then this process is terminated). Since $\cT_i$ contains (resp. contained) a document $T$ of size $\Omega(n/\tau)$, re-building subcollections takes $O(|T|\tau\cdot u(n))$ time. \footnote{We assume here that when a new document $T$ is inserted, then $T$ is  stored in uncompressed form. Hence, the procedure that constructs 
a new one-document collection $\cT_i$ can use $O(|\cT_i|\log\sigma)$ bits of space. Alternatively we can assume that very big documents are split into several parts of at most $n/\tau$ symbols and each part is kept in a separate $\cT_i$.}
Hence, $n_f=\Theta(n)$ at any time.

\section{Dynamic Transformation with Lower Update Cost}
\label{sec:fasteramortized}
\begin{transform}
\label{trans:trans2}
Suppose that there exists a static $(u(n), w(n))$-constructible index $\cI_s$ that uses $|S|\phi(S)$ space for any document collection $S$. Then there exists a dynamic index $\cI_d$ that uses $|S|\phi(S)+O(|S|(\frac{\log\tau+\log \sigma}{\tau}+w(n)))$ space for any parameter $\tau=O(\log n/\log \log n)$; $\cI_d$ supports insertions and deletions of documents  in  $O(u(n)\log \log n)$ time per symbol and $O(u(n)\cdot \tau+ \tsa+ \log^{\eps}n)$ time per symbol respectively. Update times are amortized. The asymptotic cost of range-finding increases by factor $O(\log \log n)$;
the costs of  extracting and locating are the same in $\cI_s$ and $\cI_d$. 
\end{transform}
We divide the document collection $\cC$ into sub-collections $\cC_1,\ldots, \cC_r$ such that $|\cC_i|\le \max_i$ and $\max_i=2(n/\log^2n) 2^{i}n$ for $i=0,1,\ldots,r$.
Thus the number of sub-collections is $r=O(\log \log n)$. All collections $\cC_i$ are organized, queried, and updated in exactly the same way as in Transformation~\ref{trans:trans1}. 
Since we must query $O(\log \log n)$ 
sub-collections, the time to answer a range-finding query grows by $O(\log \log n)$ factor.
Deletion time is the same as in Transformation~\ref{trans:trans1} because the same deletion-only indices for sub-collections are used.  
Analysis of insertion costs is similar to Transformation~\ref{trans:trans1}. Between two global rebuilds 
every text is inserted into each sub-collection at most once. 
When  a sub-collection $\cC_i$ is re-built, we insert $\Omega(|\cC_i|)$ new symbols into $\cC_i$. Hence, re-building a collection incurs an amortized cost of~$O(u(n))$ on every new symbol in $\cC_i$. Thus the total amortized cost of an insertion is $O(u(n)\log\log n)$.

\section{Analysis of Space Usage}
\label{sec:spacean}
In this Section we show that the space overhead
caused by keeping deleted symbols is bounded. 
Suppose that $n/\tau$ symbols from some documents are marked as deleted in a collection $\cC$. Let $\cC'$ denote the collection 
$\cC$ without deleted documents. In this section we consider the case when the space usage of $\cC$ is bounded by $nH_k+o(n)$ for some $k\ge 1$.

A context $c_i$ is an arbitrary sequence of length $k$ over an alphabet $\sigma$; for simplicity we identify a context $c_i$ by its index $i$ where $i\in [1,\sigma^k]$. 
Let $f_{a,i}$ and $f'_{a,i}$ denote the number of times the symbol $a$ occurs in  the context $i$ in $\cC$ and $\cC'$ respectively. Let $n_i=\sum_{a}f_{a,i}$ and $n'_{i}=\sum_a f'_{a,i}$.  
The $k$-th order empirical entropy of~$\cC$ is defined as $\sum_{c_i\in \Sigma^k}\sum_{a\in\Sigma} f_{a,i}\log\frac{n_i}{f_{a,i}}$.

We need $F_1=\sum_i\sum_a f'_{a,i}\log\frac{n_{i}}{f_{a,i}}$ bits to keep all deleted symbols. We express  $\log\frac{n_i}{f_{a,i}}=\log\frac{n_i}{n'_i}+\log\frac{n'_i}{f'_{a,i}}+\log\frac{f'_{a,i}}{f_{a,i}}< \log\frac{n_i}{n'_i}+\log\frac{n'_i}{f'_{a,i}}$. 
Furthermore $\sum_i\sum_a f'_{a,i}\log\frac{n'_i}{f'_{a,i}}\le 
\frac{n}{\tau}\log\sigma$. 
We can also show that $\sum_i n'_i\log\frac{n_i}{n'_i}= o(n)$. 
All contexts $i$ are divided into three sets. Let $I_1$ contain all context indices, such that $n_i\ge n'_i\log^2n$. 
For all $i\in I_2$, $n_i\log^2 n> n_i'\ge n_i(\log\log n)^2$.
For all $i\in I_3$, $n_i(\log\log n)^2> n_i'$. 
Since $\sum n_i=O(n)$, $\sum_{i\in I_1} n'_i\log\frac{n_i}{n'_i}+\sum_{i\in I_2} n'_i\log\frac{n_i}{n'_i}=O(n)(\frac{1}{\log n}+\frac{1}{\log\log n})=o(n)$.
Since $\sum_i n'_i=O(n/\tau)$, $\sum_{i\in I_3} n'_i\log\frac{n_i}{n'_i}=O(\frac{n}{\tau}\log^{(3)}n)=o(n)$ for $\tau=\Omega(\log^{(3)}n)$.  
Hence $F_1=(n/\tau)\log\sigma +o(n)$. 

The contexts of most symbols in $\cC'$ are the same as in $\cC$. Only first $k\le \log_{\sigma}n/2$ symbols in each document can change context (because the previous document was deleted). The total number of such symbols is bounded by $\rho\cdot k$. These 
symbols are encoded in $O(\rho\log n)+o(n)$ bits. Contexts of remaining symbols in $\cC'$ remain unchanged. 
The space consumed by other (not deleted) symbols can be still slightly higher than optimal. Let $\of_{a,i}=f_{a,i}-f'_{a,i}$ and 
$\on_{a,i}=\sum_a \of_{a,i}=n_{i}-n'_i$. For simplicity we ignore symbols that changed contexts. All undeleted symbols use $E_u=\sum_i\sum_a \of_{a,i}\frac{n_i}{f_{a,i}}$ bits. Optimal compression of the same sequence would use $E_o=\sum_i\sum_a \of_{a,i}\frac{\on_i}{\of_{a,i}}$. $F_2=E_1-E_2\le \sum_i\sum_a \of_{a,i}\log\frac{n_i}{\on_i}=\sum_i \on_i \log\frac{n_i}{\on_i}=O(n)$. 
Thus the total additional space is $F_1+F_2=O(n\frac{\log\sigma}{\tau})+o(n\log\sigma)$.

\section{Construction Times of Static Indexes}
\label{sec:constimeapp}
\paragraph{Arbitrarily Large Alphabets}

It can be shown that the index of Belazzougui and Navarro~\cite{BelazzouguiN11} is $(\log^{\eps}n,\log\sigma)$-constructible.
  This index consists of three components. First, a BWT transform is applied to the source text. Then a data structure of Barbay et al.~\cite{BGNN10} for the BWT-transformed text is created; this data structure supports select queries in $O(1)$ time. Second, a compressed suffix tree for the source text is created. Third, we keep w-links on the compressed suffix tree. A w-link for an alphabet symbol $a$ points from a node $u$ that is labelled with a suffix $X$ to a node or a position in a tree that is labelled with a suffix $aX$; if $aX$ does not occur, then the link for $a$ and $u$ does not exist. W-links are implemented using a collection of monotone minimum perfect hash functions (mmphf)~\cite{BelazzouguiBPV09}. 

The index from~\cite{BelazzouguiN11} can be constructed as follows.  First, we construct a compressed suffix tree in $O(n\log^{\eps}n)$ time using  $O(n \log\sigma)$ bits of extra space by employing the algorithm described in~\cite{HSS09}. 
Then we traverse the tree and produce mmphf in $O(n)$ randomized time. 
Next we obtain the BWT transform of~$T$; this step takes $O(n\log\sigma)$ extra bits and $O(n)$ time. Finally, we construct the data structure from~\cite{BGNN10}. Our method for constructing the data structure is as follows: Let $T^b$ denote the BWT-transformed sequence.
We split $T^b$  into chunks $C_j$, such that each chunk but the last consists of~$\sigma^2$ symbols and the last $C_l$ consists of at most $\sigma^2$ symbols. Then the data structure of~\cite{BGNN10} is constructed for each chunk. The symbols of~$C_i$ are distributed among $O(\log \sigma)$ groups $G_s$.
Each symbols in $G_i$ occurs at least $2^i$ and at most $2^{i+1}$ times for 
$i=1,2,\ldots, \log |C_i|$. This step takes linear time and $O(\sigma\log \sigma)$ extra bits. 
Let $C_{s,i}$ denote the subsequence of~$C_s$ induced by symbols of~$G_i$; 
let $C_s(G)$ denote the sequence that specifies the group index for every symbol of~$C_s$. We replace $C_s$ with $C_s(G)$ and subsequences $C_{s,i}$. 
Data structures supporting rank, select queries are stored for $C_s(G)$ and 
all $C_{s,i}$.  The data structure for $C_s(G)$ is implemented as described 
in~\cite{FMMN07}; the data structures for $C_{s,i}$ are implemented as described by  Golynski et al.~\cite{GMR06}.
Both data structures can be constructed in linear time using $o(|C_{s,i}|)$ additional bits.
If select queries on each chunk can be answered in $O(1)$ time, we can also answer select queries on $T^b$ using  $O(n)$ additional bits. The method is based on keeping a bit vector $B_a=1^{j_1}01^{j_2}\ldots 1^{j_f}0$ for every symbol $a$, where $f$ is the number of 
chunks and $j_i$ is the number of times $a$ occurs in the $i$-th chunk $C_i$. We create a data structure that answers rank and select 
queries on $B$. Then, we can identify the chunk that contains the $l$-th occurrence of~$a$ by answering a query $rank_0(select_1(l,B_a),B_a)$. Then we identify the position of~$l$-th occurrence of~$a$ by a query $select_a(l-l',C_h)$, where 
$l'=rank_1(h-1,B_a)$. Data structures for a chunk $C_i$ can be constructed in linear time using $O(|C_i|\log\sigma)$ bits of workspace.

\paragraph{Index of Barbay et al.   \cite{BGNN10}}
This index is a part of the data structure of  Belazzougui and Navarro~\cite{BelazzouguiN11}. Hence it is also $(\log^{\eps}n,\log\sigma)$-constructible. Unlike the structure in~\cite{BelazzouguiN11} the index described in~\cite{BGNN10} can be constructed by a deterministic algorithm. 

\paragraph{$O(n\log\sigma)$-bit Index}
The index of Grossi and Vitter~\cite{GrossiV05} is also $(\log^{\eps}n,\log\sigma)$-constructible. Their index consists of the compressed suffix array $CSA$ and functions 
$\Psi^k(i)=SA^{-1}[SA[i]+k]$ for $k=1,\ldots,\log^{i\eps}n,\ldots$ and $i=0,1,\ldots (1/\eps)$. Using the algorithm of Hon et al~\cite{HSS09}, we can construct 
$CSA$ and $\Psi^k$ in $O(n\log \log \sigma)$ time using $O(n\log \sigma)$ bits. To speed up the range finding, Grossi and Vitter store a series of suffix trees
for subsequences of the suffix array. The top level tree is a compressed trie over $s_1=n/\log_{\sigma}n$ suffixes $SA[1]$, $SA[1+\log_{\sigma}n]$, $\ldots$. 
On the next level, we consider each subarray $SA_h=SA[(h-1)\log_{\sigma}n+1..h\log_{\sigma}n]$. We select every $\log_{\sigma}^{\eps/2}n$-th suffix from $SA_h$ and construct 
a suffix tree for this set of suffixes. On the next level, we consider subarrays of size $\log_{\sigma}^{1-\eps/2}n$, select every $\log_{\sigma}^{\eps/2}n$-th suffix 
and construct a suffix tree for the resulting subset. This subdivision continues untill the size of the subarray is equal to $\log_{\sigma}^{\eps}n$. These suffix tree can be constructed in $O(n\log^{\eps}n)$ time and $O(n\log\sigma)$ bits: the total number of leaves in all suffix trees is $o(n)$ and a suffix tree 
for $m$ suffixes can be constructed in $O(m\log^{\eps}n)$ time~\cite{HSS09}.  The search for a range of the suffix array that corresponds to the query pattern is described in\cite{GrossiV05}. Thus the index from \cite{GrossiV05} can be constructed in 
$O(n\log^{\eps}n)$ time using $O(n\log\sigma)$ space. 


}

\end{document}